\documentclass[fleqn]{article}
\usepackage{amsmath}
\usepackage{amsthm}
\usepackage{amssymb}
\usepackage{enumerate}
\usepackage[pdftex]{graphics}
\usepackage[all]{xy}

\addtolength{\textwidth}{3cm}
\addtolength{\hoffset}{-1.5cm}
\addtolength{\textheight}{2cm}
\addtolength{\voffset}{-1cm}

\theoremstyle{plain}
\newtheorem{lem}{Lemma}
\newtheorem{thm}[lem]{Theorem}
\theoremstyle{definition}
\newtheorem{defn}{Definition}

\DeclareMathOperator{\gr}{\mathbf{gr}}
\DeclareMathOperator{\grc}{\mathbf{grc}}
\DeclareMathOperator{\End}{End}
\DeclareMathOperator{\ind}{ind}
\DeclareMathOperator{\res}{res}
\DeclareMathOperator{\ch}{ch}
\DeclareMathOperator{\sgn}{sgn}
\newcommand{\ilim}{\mathop{\varprojlim}\limits}

\usepackage[pdftex]{hyperref}

\begin{document}
\begin{center}
\Large
{\bf The algebra of local unitary invariants of identical particles}
\end{center}
\vspace*{-.1cm}
\begin{center}
P\'eter Vrana
\end{center}
\vspace*{-.4cm} \normalsize
\begin{center}
Department of Theoretical Physics, Institute of Physics\\ Budapest University of Technology and Economics\\ H-1111 Budapest, Hungary

\vspace*{.2cm}
(\today)
\end{center}

\begin{abstract}
We investigate the properties of the inverse limit of the algebras of local
unitary invariant polynomials of quantum systems containing various types of
fermionic and/or bosonic particles as the dimensions of the single particle
state spaces tend to infinity. We show that the resulting algebras are free
and present a combinatorial description of an algebraically independent
generating set in terms of graphs. These generating sets can be interpreted
as minimal sets of polynomial entanglement measures distinguishing between
states showing different nonclassical behaviour.
\end{abstract}

\section{Introduction}

One of the most important questions in quantum information theory is the
classification of states containing inequivalent types of quantum correlations.
Many results exist in the case when the parts of the composite system are
effectively distinguishable, but much less is known in the case of identical
particles. When discussing entanglement at short scales, one can expect that
indistinguishability of identical particles changes the nature of quantum
correlations in a fundamental way. Understanding such correlations and
inequivalent types of entanglement of identical particles is also of great
importance outside quantum information theory \cite{Amico}.

Following recent success in describing the algebra of polynomial invariants
under the local unitary group of multipartite quantum systems in the limiting
case of large Hilbert space dimensions \cite{Vrana}, in this paper we turn
to the question of finding a similar description when identical particles are
present.

In the case of distinguishable subsystems one has a natural notion of local
operations: these can be realized by acting on the parts separately. Lack of
distinguishability of particles, however, prevents one to apply this definition
directly to identical particles, and indeed, the notion of entanglement in this
case is not entirely settled yet and several different definitions exist
\cite{Shi,Barnum,Ichikawa,Sasaki}. In this paper we regard unitary evolutions
generated by Hamiltonians built from single-particle observables as the
analogue of local transformations as e.g. in ref. \cite{Eckert}. The state
space is then of the form $S^l(\mathcal{H})=\mathbb{S}_{(l)}\mathcal{H}$ (bosons)
or $\Lambda^l(\mathcal{H})=\mathbb{S}_{(1^l)}\mathcal{H}$ (fermions), and a local
transformation is described by the appropriate restriction of $U^{\otimes l}$
(to be denoted by $\mathbb{S}_{(l)}U$ and $\mathbb{S}_{(1^l)}U$ respectively)
for some unitary operator $U$ acting on $\mathcal{H}$.

In some cases the physical setting provides a splitting of the single particle
Hilbert space into the direct sum of orthogonal subspaces. The most natural
example is when the position of a particle can be localized in one of a few
possible regions of space having vanishing or negligible overlap. In this situtation
we may assume that the single-particle Hamiltonian leaves the orthogonal subspaces
fixed, and this further restricts the possible transformations leading to
conservation of local particle numbers.

Finally, one may have to consider a quantum mechanical system where both bosons
and fermions are present and the most general Hilbert space obtained this way
has the form
\begin{equation}
\mathbb{S}_{\lambda_1}\mathcal{H}_1\otimes\cdots\otimes\mathbb{S}_{\lambda_k}\mathcal{H}_k
\end{equation}
on which the local unitary operations $\mathbb{S}_{\lambda_1}U_1\otimes\cdots\otimes\mathbb{S}_{\lambda_k}U_k$ act
where $\lambda_j\in\{(l_j),(1^{l_j})\}$. Note that the same Hilbert space can
also be interpreted as the state space of a system of $k$ types of particles
where different types can be distinguished but particles of the same type are
identical.

The outline of this paper is as follows. In section \ref{sec:LUinv} we introduce
the object which we study, namely, the inverse limit of the algebras of invariant
polynomials under products of unitary groups starting over any irreducible
representation where the limit is taken as the dimensions tend to infinity.
Here we state that the inverse limit does not contain ``too many'' elements,
that is every element is already represented for some finite values of the
dimensions.

Section \ref{sec:symfunc} summarizes some concepts related to the representation
theory of the unitary groups and its connection to symmetric groups and their
representations.
In section \ref{sec:Hilbser} we derive a formula for the dimensions of the
homogeneous subspaces of the inverse limit in terms of induced characters.

Section \ref{sec:mat} prepares the detailed study of bosonic and fermionic
systems by collecting some properties of regular bipartite graphs and integer
stochastic matrices. In section \ref{sec:alggen} we utilize these tools in
the simplest nontrivial case: we describe invariants of pure states of a quantum system
containing any number of identical bosons.

The main result is stated in section \ref{sec:various} which deals with the most
general quantum systems containing different types of bosonic, fermionic or
distinguishable particles in any combination. We find a convenient way of encoding
elements of an algebraically independent generating set by certain graphs.
In section \ref{sec:mixed} we show that the problem of finding invariants of
mixed states can be completely reduced to the case of pure states.

Some concluding remarks follow in section \ref{sec:conclusion}. For the readers'
convenience some more technical proofs are presented outside the main text in
section \ref{sec:proofs}.

\section{The algebra of local unitary invariants}\label{sec:LUinv}

Let $k\in\mathbb{N}$, $\lambda_1,\ldots,\lambda_k$ be nonempty partitions and
$n=(n_1,\ldots,n_k)\in\mathbb{N}^k$, and let us consider the complex Hilbert
space $\mathcal{H}_{k,(\lambda_1,\ldots,\lambda_k),n}=\mathbb{S}_{\lambda_1}(\mathbb{C}^{n_1})\otimes\mathbb{S}_{\lambda_2}(\mathbb{C}^{n_2})\otimes\cdots\otimes\mathbb{S}_{\lambda_k}(\mathbb{C}^{n_k})$.
This space carries an irreducible representation of $LU_n:=U(n_1,\mathbb{C})\times\cdots\times U(n_k,\mathbb{C})$,
the group of local unitary transformations. For $n\le n'\in\mathbb{N}^k$ (with
respect to the componentwise partial order) we have the usual inclusions $\mathbb{C}^{n_j}\hookrightarrow\mathbb{C}^{n'_j}$
sending an $n$-tuple of complex numbers to the first $n'$ components which
gives rise to the inclusion $\iota_{n,n'}:\mathcal{H}_{k,(\lambda_1,\ldots,\lambda_k),n}\hookrightarrow\mathcal{H}_{k,(\lambda_1,\ldots,\lambda_k),n'}$ (obtained by tensoring the images of the former inclusions under the appropriate Schur functor).
Regarding $LU_n$ as the subgroup of $LU_{n'}$ acting on the first $n_i$ entries
in the $i$th factor and stabilizing the image of $\iota_{n,n'}$, one can see that
$\iota_{n,n'}$ is an $LU_n$-equivariant linear map.

Let $I_{k,(\lambda_1,\ldots,\lambda_k),n}$ denote the algebra of $LU_n$-invariant
polynomial functions over $\mathcal{H}_{k,(\lambda_1,\ldots,\lambda_k),n}$, regarded
as a real vector space. Polynomial functions can be identified with elements in $S(\mathcal{H}_{k,(\lambda_1,\ldots,\lambda_k),n}\oplus\mathcal{H}_{k,(\lambda_1,\ldots,\lambda_k),n}^*)$, the symmetric algebra on $\mathcal{H}_{k,(\lambda_1,\ldots,\lambda_k),n}\oplus\mathcal{H}_{k,(\lambda_1,\ldots,\lambda_k),n}^*$ on which an action of $LU_n$ is induced and we have
\begin{equation}
I_{k,(\lambda_1,\ldots,\lambda_k),n}=S(\mathcal{H}_{k,(\lambda_1,\ldots,\lambda_k),n}\oplus\mathcal{H}_{k,(\lambda_1,\ldots,\lambda_k),n}^*)^{LU_n}
\end{equation}
As in ref. \cite{Vrana}, we use a grading which is different from the usual one in a factor
of two, and call homogeneous of degree $m$ the polynomials which are of degree $m$ both
in the coefficients and their conjugates. This convention is convenient as an
invariant must have the same degree in the coefficients and their conjugates,
as seen from the fact that $e^{i\varphi}\in U(1)\simeq Z(U(n_j,\mathbb{C}))$ acts on
$S^{p}(\mathcal{H}_{k,(\lambda_1,\ldots,\lambda_k),n})\otimes S^{q}(\mathcal{H}_{k,(\lambda_1,\ldots,\lambda_k),n}^*)$ by multiplication with
$e^{i(p-q)|\lambda_j|\varphi}$.

The inclusions $\iota_{n,n'}:\mathcal{H}_{k,(\lambda_1,\ldots,\lambda_k),n}\hookrightarrow\mathcal{H}_{k,(\lambda_1,\ldots,\lambda_k),n'}$
induce morphisms of graded algebras $\varrho_{n,n'}:I_{k,(\lambda_1,\ldots,\lambda_k),n'}\to I_{k,(\lambda_1,\ldots,\lambda_k),n}$
defined by $(\varrho_{n,n'}f)(\varphi)=f(\iota_{n,n'}\varphi)$. Thus we obtain
the inverse system $((I_{k,(\lambda_1,\ldots,\lambda_k),n})_{n\in\mathbb{N}^k},(\varrho_{n,n'})_{n\le n'\in\mathbb{N}^k})$
of graded algebras, the inverse limit of which will be denoted by $I_{k,(\lambda_1,\ldots,\lambda_k)}$ and
called the algebra of LU-invariants:
\begin{equation}
I_{k,(\lambda_1,\ldots,\lambda_k)}:=\ilim_{n\in\mathbb{N}^k}I_{k,(\lambda_1,\ldots,\lambda_k),n}=\left\{(f_n)_{n\in\mathbb{N}^k}\in\prod_{n\in\mathbb{N}^k}I_{k,(\lambda_1,\ldots,\lambda_k),n}\Bigg|\forall n\le n':f_n=\varrho_{n,n'}f_{n'}\right\}
\end{equation}
Note that $I_{k,(\lambda_1,\ldots,\lambda_k),n}$ is a quotient of $I_{k,(\lambda_1,\ldots,\lambda_k)}$.

The next lemma implies that every element of $I_{k,(\lambda_1,\ldots,\lambda_k)}$
is already represented in some $I_{k,(\lambda_1,\ldots,\lambda_k),n}$.
\begin{lem}\label{lem:lowdimiso}
Let $k\in\mathbb{N}$ and $n\le n'\in\mathbb{N}$. Then the restriction of
$\varrho_{n,n'}:I_{k,n'}\to I_{k,n}$ to the subspace of elements of degree
at most $\min\{|\lambda_1|n_1,\ldots,|\lambda_k|n_k\}$ is an isomorphism.
\end{lem}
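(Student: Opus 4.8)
The plan is to reduce the statement to a comparison of relation ideals. Writing $\mathcal{H}_n$ for $\mathcal{H}_{k,(\lambda_1,\ldots,\lambda_k),n}$ and using the Schur embeddings $\mathbb{S}_{\lambda_j}\mathbb{C}^{n_j}\hookrightarrow(\mathbb{C}^{n_j})^{\otimes|\lambda_j|}$, I identify a homogeneous invariant of degree $m$ with an $LU_n$-invariant multilinear form that is fed $m$ symmetrized copies of a state and $m$ of its conjugate; thus in the $j$-th tensor slot it carries $m|\lambda_j|$ upper and $m|\lambda_j|$ lower indices. By the first fundamental theorem for the unitary group (equivalently for $GL$ after complexification), every such invariant is a linear combination of the complete contractions pairing the upper with the lower indices in each slot separately. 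These contractions are indexed by purely combinatorial data — a matching in each factor, taken modulo the symmetrizers of type $\lambda_j$ and the $S_m$ permuting the copies — that does not involve $n$. Hence both $I_{k,(\lambda_1,\ldots,\lambda_k),n}$ and $I_{k,(\lambda_1,\ldots,\lambda_k),n'}$ in degree $m$ are quotients $U_m/\ker c_n$ and $U_m/\ker c_{n'}$ of one and the same universal space $U_m$ by the kernels of the respective contraction maps, and $\varrho_{n,n'}$ is exactly the map induced by $\id_{U_m}$.

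Surjectivity of $\varrho_{n,n'}$ is then immediate and holds in every degree: a contraction pattern defines an invariant by the same formula in all dimensions, so each generator of $I_{k,(\lambda_1,\ldots,\lambda_k),n}$ is visibly the image of the corresponding generator of $I_{k,(\lambda_1,\ldots,\lambda_k),n'}$. Equivalently, the coordinate projection $\mathbb{C}^{n'_j}\to\mathbb{C}^{n_j}$ induces an $LU_n$-equivariant splitting of $\iota_{n,n'}$, so any invariant on $\mathcal{H}_n$ pulls back to $\mathcal{H}_{n'}$.

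The substance is injectivity, and this is where the degree restriction enters. Since enlarging the dimensions only removes relations, one always has $\ker c_{n'}\subseteq\ker c_n$, and $\varrho_{n,n'}$ is injective precisely when this inclusion is an equality, i.e. when passing from $n'$ down to $n$ creates no new relation among the contractions in degree $m$. The second fundamental theorem identifies the relations: in the $j$-th factor they are generated by the vanishing of the antisymmetrizer over any $n_j+1$ of the $\mathbb{C}^{n_j}$-indices. In degree $m$ only $m|\lambda_j|$ such indices are present in that factor, so as soon as $m|\lambda_j|\le n_j$ for every $j$ no antisymmetrizer over $n_j+1$ indices can be formed and no relation is available; this yields $\ker c_n=\ker c_{n'}$ and hence the desired isomorphism.

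I expect the main obstacle to be the careful bookkeeping of indices through the Schur functors: one must check that composing the contraction patterns with the symmetrizers defining $\mathbb{S}_{\lambda_j}$ leaves exactly $m|\lambda_j|$ free indices in the $j$-th slot, so that the second fundamental theorem applies with threshold $n_j$, and that all $k$ factors are controlled simultaneously, which is what forces the bound to be imposed separately on each $j$. In the representation-theoretic language developed in the later sections the same fact reads as follows: every irreducible $\bigotimes_j\mathbb{S}_{\nu^{(j)}}\mathbb{C}^{n_j}$ occurring in $S^m\mathcal{H}_{n'}$ has $|\nu^{(j)}|=m|\lambda_j|$, hence at most $m|\lambda_j|\le n_j$ nonzero rows in each factor, so $\iota_{n,n'}$ loses no isotypic component and induces an isomorphism of the corresponding spaces of invariants. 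Recasting the whole argument in this form is the version I would ultimately write down.
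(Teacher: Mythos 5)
Your proof is correct in substance, but its key step is argued differently from the paper's, so it is worth comparing the two. Both you and the paper get surjectivity of $\varrho_{n,n'}$ in every degree the same way: degree-$m$ invariants are spanned by $n$-independent combinatorial data (contraction patterns, i.e.\ elements of $\mathbb{C}(S_{ml_1}\times\cdots\times S_{ml_k})$ mapped in via Schur--Weyl duality), and these spanning invariants are compatible with restriction, which forces $\varrho_{n,n'}$ to be onto. The divergence is at injectivity. The paper never touches kernels: it observes that for $m|\lambda_j|\le n_j$ the multiplicities of the irreducibles $\bigotimes_j\mathbb{S}_{\nu^{(j)}}\mathbb{C}^{n_j}$, $\nu^{(j)}\vdash m|\lambda_j|$, occurring in $S^m(\mathcal{H}_n)$ are independent of $n$ and none of them vanish, so $\dim I_{k,(\lambda_1,\ldots,\lambda_k),n}^{(m)}=\dim I_{k,(\lambda_1,\ldots,\lambda_k),n'}^{(m)}$, and a surjection between finite-dimensional spaces of equal dimension is an isomorphism --- this is exactly the isotypic-component ``recasting'' you sketch in your final paragraph, so the version you say you would ultimately write down \emph{is} the paper's proof. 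Your primary route instead controls the kernel of the contraction map via the second fundamental theorem: relations are generated by antisymmetrizers over $n_j+1$ indices in the $j$th factor, and none can be formed when only $m|\lambda_j|\le n_j$ indices are present, so both kernels vanish (equivalently, Schur--Weyl is injective in this range) and injectivity follows directly. This buys a self-contained argument that does not rely on the stable multiplicity count, and it exhibits explicitly what fails outside the stable range; the cost is the bookkeeping you yourself flag: one must check that compression by the plethysm projector onto $S^m(\mathbb{S}_{\lambda_1}\mathbb{C}^{n_1}\otimes\cdots\otimes\mathbb{S}_{\lambda_k}\mathbb{C}^{n_k})$ is implemented by a group-algebra idempotent independent of $n$, so that the part of $\ker c_n$ beyond the Schur--Weyl kernel is the same for all $n$ and cancels in the comparison. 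Note also that both arguments yield the working condition $m|\lambda_j|\le n_j$ for all $j$, which is what the paper's proof actually uses (the bound printed in the lemma's statement appears to be a misprint for $\min\{n_1/|\lambda_1|,\ldots,n_k/|\lambda_k|\}$).

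One aside in your write-up is wrong and should be deleted: the claim that surjectivity follows ``equivalently'' because the coordinate projection splits $\iota_{n,n'}$. Pulling back an $LU_n$-invariant $f$ on $\mathcal{H}_n$ along that projection produces a polynomial on $\mathcal{H}_{n'}$ that is invariant only under $LU_n$, not under $LU_{n'}$ (e.g.\ $\sum_{i\le n}|\psi_i|^2$ on $\mathbb{C}^{n'}$), so it need not lie in $I_{k,(\lambda_1,\ldots,\lambda_k),n'}$ at all; surjectivity genuinely requires producing an $LU_{n'}$-invariant preimage, which is what the contraction-pattern argument does. Since that argument stands on its own, this is a removable blemish rather than a gap.
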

For the proof see sec. \ref{sec:proofs}.

\section{Polynomial representations of the unitary groups, symmetric functions and the characteristic map}\label{sec:symfunc}

In this section we collect some well-known facts related to the representations
of unitary and symmetric groups and their characters. For more details see e.g.
refs. \cite{FH,Macdonald}.

Recall that the isomorphism class of a polynomial representation of the unitary
group $U(n,\mathbb{C})$ is uniquely determined by its character which is a
symmetric polynomial of the eigenvalues with integer coefficients. Irreducible
representations correspond to Schur polynomials indexed by partitions of integers
into at most $n$ parts. We are interested in the large $n$ limit, therefore it
is convenient to work with the algebra $\Lambda$ of symmetric functions in infinitely many
variables. A basis of $\Lambda$ is the set $\{s_\lambda|\lambda\vdash n,n\in\mathbb{N}\}$
where $s_\lambda$ is the Schur polynomial labelled by $\lambda$ and $\lambda\vdash m$
denotes the fact that $\lambda$ is a partition of $m$.

The usual inner product on the space of class functions on the compact group
$U(n,\mathbb{C})$ translates to an inner product on $\Lambda$ defined by
\begin{equation}
\langle s_\lambda,s_\mu\rangle=\delta_{\lambda\mu}
\end{equation}
Direct sums and tensor products of the representations correspond to sums and
products of the corresponding symmetric functions, respectively.
Given two representations $\varrho_1:U(n_1,\mathbb{C})\to U(n_2,\mathbb{C})$ and
$\varrho_2:U(n_2,\mathbb{C})\to U(n_3,\mathbb{C})$, we can form their composition
$\varrho_2\circ\varrho_1:U(n_1,\mathbb{C})\to U(n_3,\mathbb{C})$ and its character
can be calculated as the plethysm of that of $\varrho_1$ with $\varrho_2$. This
operation may be defined on $\Lambda$ by $p_n[p_{n'}]=p_{nn'}$ and $f[p_n]=p_n[f]$
for $f\in\Lambda$ and extending to be an algebra endomorphism in the outer variable.

The ring of symmetric functions is also connected to the representation theory
of the symmetric groups via the characteristic map. Denoting the space of class
functions on $S_n$ by $\mathcal{R}_n$ and the $n$th homogeneous part of $\Lambda$ by
$\Lambda^n$, we define $\ch:\mathcal{R}_n\to\Lambda^n$ by
\begin{equation}
\ch f=\sum_{\mu\vdash n}z_\mu^{-1}f(\mu)p_\lambda
\end{equation}
where $f\in \mathcal{R}_n$ and $f(\mu)$ denotes its value on the conjugacy class labelled
by $\mu$, that is, on elements of cycle type $\mu$, and
\begin{equation}
z_\mu=\prod_{i}i^{a_i}i!
\end{equation}
if $\mu$ has $a_1$ ones, $a_2$ twos etc. We may also regard $\ch$ as a map
from $\bigoplus_{n\in\mathbb{N}}\mathcal{R}_n$ to $\Lambda$. Note also that $\mathcal{R}_n$
is equipped with the usual inner product, and hence also their direct sum.

Characters of irreducible representations of $S_n$ are indexed by partitions of
$n$, and will be denoted by $\chi_\mu$ where $\mu\vdash n$. These form an orthonormal
basis of $\mathcal{R}_n$, and an important property of $\ch$ is that $\ch\chi_\mu=s_\mu$,
in particular, $\ch$ is an isometry.

The sum, product and plethysm operations also have a description in terms of
representations of the symmetric group. Clearly, the direct sum of representations
corresponts to the sum in $\Lambda$. The product translates to the induction product
defined as follows. Let $\chi_i:S_{n_i}\to\mathbb{C}$ be the character of the
representation $V_i$ of $S_{n_i}$ for $i=1,2$. Then $S_{n_1}\times S_{n_2}$
can be regarded as the subgroup of $S_{n_1+n_2}$ containing bijections which
permute the first $n_1$ numbers among themselves and similarly the last $n_2$
numbers. The induction product of the two representations is
\begin{equation}
\ind_{S_{n_1}\times S_{n_2}}^{S_{n_1+n_2}}V_1\otimes V_2
\end{equation}
It can be shown that the image of its character under $\ch$ is $\ch\chi_1\cdot\ch\chi_2$

Finally, plethysm can be described in terms of the wreath product of representations.
Similarly as before, $S_k^m$ can be regarded as a subgroup of $S_{km}$. Its normalizer
is isomorphic to the wreath product $S_k\wr S_m$.

We can think of the wreath product $S_k\wr S_m$ as the set $S_k^m\times S_m$ with
multiplication defined by
\begin{equation}
(p_1,p_2,\ldots,p_m,v)\cdot(p'_1,p'_2,\ldots,p'_m,v')=(p_1p'_{v^{-1}(1)},p_2p'_{v^{-1}(2)},\ldots,p_mp'_{v^{-1}(m)},vv')
\end{equation}

Now let $V$ be a representation of $S_k$ with character $\chi$ and $W$ a representation of $S_m$ with character $\theta$. We
can define a representation of $S_k\wr S_m$ on $V^{\otimes m}\otimes W$ by
\begin{equation}
(p_1,p_2,\ldots,p_m,v)\cdot(x_1\otimes\cdots\otimes x_m\otimes y)=(p_1\cdot x_{v^{-1}(1)})\otimes\cdots\otimes (p_m\cdot x_{v^{-1}(m)})\otimes(v\cdot y))
\end{equation}
We denote the character of this representation by $\chi\wr\theta$.
Regarding $S_k\wr S_m$ as the subgroup of $S_{km}$ above, we have the following
equality:
\begin{equation}
\ch\ind_{S_k\wr S_m}^{S_{km}}(\chi\wr\theta)=(\ch\theta)[\ch\chi]
\end{equation}

Finally, applying to irreducible representations and expanding in the basis of
Schur functions (equivalently: irreducible characters of the symmetric groups)
we write:
\begin{equation}
\ind_{S_k\wr S_m}^{S_km}(\chi_\lambda\wr\chi_mu)=\sum_{\nu\vdash km}M_{\lambda\mu\nu}\chi_\nu
\end{equation}
and
\begin{equation}
s_\mu[s_\lambda]=\sum_{\nu\vdash km}M_{\lambda\mu\nu}s_\nu
\end{equation}
or as the composition of the corresponding Schur functors
\begin{equation}
\mathbb{S}_\mu\mathbb{S}_\lambda=\sum_{\nu\vdash km}M_{\lambda\mu\nu}\mathbb{S}_\nu
\end{equation}

\section{Hilbert series of the algebra of LU-invariants}\label{sec:Hilbser}

Let $k,m\in\mathbb{N}$, $\lambda_1,\ldots,\lambda_k$ be nonempty partitions and
$n\in\mathbb{N}^k$ such that $n\ge(m,m,\ldots,m)$. Our aim is to calculate the dimension $d_m$
of the $m$th graded subspace of $I_{k,(\lambda_1,\ldots,\lambda_k),n}$. This subspace
can be written as
\begin{equation}
S^{2m}(\mathcal{H}_{k,(\lambda_1,\ldots,\lambda_k),n}\oplus\mathcal{H}_{k,(\lambda_1,\ldots,\lambda_k),n}^*)^{LU_n}=(S^m(\mathcal{H}_{k,(\lambda_1,\ldots,\lambda_k),n})\oplus S^m(\mathcal{H}_{k,(\lambda_1,\ldots,\lambda_k),n})^*)^{LU_n}
\end{equation}
hence if we write $S^m(\mathcal{H}_{k,(\lambda_1,\ldots,\lambda_k),n})$ as the
direct sum of irreducible representations of $LU_n$, then $d_m$ is the sum of
the squares of multiplicities.

For $\lambda\vdash m$ we have the isomorphism
\begin{equation}
\mathbb{S}_\lambda(V_1\otimes\ldots\otimes V_k)\simeq\bigoplus_{\mu_1,\ldots,\mu_k\vdash m}C_{\lambda\mu_1\ldots\mu_k}\mathbb{S}_{\mu_1}V_1\otimes\cdots\mathbb{S}_{\mu_k}V_k
\end{equation}
where $C_{\lambda\mu_1\ldots\mu_k}=\langle\chi_\lambda,\chi_{\mu_1}\cdots\chi_{\mu_k}\rangle_{S_m}$
is the multiplicity of the irreducible representation of $S_m$ indexed by the $\lambda$ in
the tensor product of irreducible representations corresponding to $\mu_1,\ldots,\mu_k$.

Now we can write
\begin{equation}
\begin{split}
S^m(\mathcal{H}_{k,(\lambda_1,\ldots,\lambda_k),n})
 & \simeq \bigoplus_{\substack{\mu_1,\ldots,\mu_k\vdash m \\ \nu_1,\ldots,\nu_k \\ \nu_i\vdash m|\lambda_i|}}C_{(m)\mu_1\ldots\mu_k}M_{\lambda_1\mu_1\nu_1}\cdots M_{\lambda_1\mu_1\nu_1}\mathbb{S}_{\nu_1}\mathbb{C}^{n_1}\otimes\cdots\otimes\mathbb{S}_{\nu_k}\mathbb{C}^{n_k}  \\
 & \simeq \bigoplus_{\substack{\nu_1,\ldots,\nu_k \\ \nu_i\vdash m|\lambda_i|}}\left(\sum_{\mu_1,\ldots\mu_k\vdash m}C_{(m)\mu_1\ldots\mu_n}M_{\lambda_1\mu_1\nu_1}\cdots M_{\lambda_1\mu_1\nu_1}\right)\mathbb{S}_{\nu_1}\mathbb{C}^{n_1}\otimes\cdots\otimes\mathbb{S}_{\nu_k}\mathbb{C}^{n_k}
\end{split}
\end{equation}
and therefore
\begin{equation}
\begin{split}
d_m & = \sum_{\substack{\nu_1,\ldots,\nu_k \\ \nu_i\vdash m|\lambda_i|}}\left(\sum_{\mu_1,\ldots\mu_k\vdash m}C_{(m)\mu_1\ldots\mu_n}M_{\lambda_1\mu_1\nu_1}\cdots M_{\lambda_1\mu_1\nu_1}\right)^2  \\
 & = \sum_{\substack{\nu_1,\ldots,\nu_k \\ \nu_i\vdash m|\lambda_i|}}\sum_{\substack{\mu_1,\ldots\mu_k\vdash m \\ \mu'_1,\ldots\mu'_k\vdash m}}C_{(m)\mu_1\ldots\mu_n}C_{(m)\mu'_1\ldots\mu'_n}M_{\lambda_1\mu_1\nu_1}M_{\lambda_1\mu'_1\nu_1}\cdots M_{\lambda_1\mu_1\nu_1}M_{\lambda_1\mu'_1\nu_1}  \\
 & = \sum_{\substack{\mu_1,\ldots\mu_k\vdash m \\ \mu'_1,\ldots\mu'_k\vdash m}}\langle 1,\chi_{\mu_1}\cdots\chi_{\mu_k}\rangle_{S_m}\langle 1,\chi_{\mu'_1}\cdots\chi_{\mu'_k}\rangle_{S_m}\cdot \\
 &\phantom{= } \cdot\sum_{\substack{\nu_1,\ldots,\nu_k \\ \nu_i\vdash m|\lambda_i|}}\prod_{i=1}^k\langle\ind_{S_{\lambda_1}\wr S_m}^{S_{m|\lambda_i|}}\chi_{\lambda_i}\wr\chi_{\mu_1},\chi_{\nu_1}\rangle_{S_{m|\lambda_i|}}\langle\chi_{\nu_1},\ind_{S_{\lambda_1}\wr S_m}^{S_{m|\lambda_i|}}\chi_{\lambda_i}\wr\chi_{\mu'_1}\rangle_{S_{m|\lambda_i|}}  \\
 & = \sum_{\substack{\mu_1,\ldots\mu_k\vdash m \\ \mu'_1,\ldots\mu'_k\vdash m}}\Big\langle\langle 1,\chi_{\mu_1}\cdots\chi_{\mu_k}\rangle_{S_m}\ind_{H_m'}^{G_m}(\chi_{\lambda_1}\wr\chi_{\mu_1})\times\cdots\times(\chi_{\lambda_k}\wr\chi_{\mu_k}),  \\
 &\phantom{= } \langle 1,\chi_{\mu'_1}\cdots\chi_{\mu'_k}\rangle_{S_m}\ind_{H_m'}^{G_m}(\chi_{\lambda_1}\wr\chi_{\mu'_1})\times\cdots\times(\chi_{\lambda_k}\wr\chi_{\mu'_k})\Big\rangle_{S_{m|\lambda_1|}\times\cdots\times S_{m|\lambda_k|}}
\end{split}
\end{equation}
where we have used that the appearing characters are all real, irreducible characters
$\chi_{\nu_i}$ form an orthonormal basis and introduced the notation $G_m=S_{m|\lambda_1|}\times\cdots\times S_{m|\lambda_k|}$ and $H_m'=(S_{|\lambda_1|}\wr S_m)\times\cdots\times(S_{|\lambda_k|}\wr S_m)$

To simplify the last expression we need the following.
\begin{lem}\label{lem:wrindmany}
Let $A_1,\ldots A_k$ be finite groups, $m\in\mathbb{N}$, $\mu\vdash m$ and let
$\alpha_i$ be a class function on $A_i$ for every $i$. Then
\begin{equation}
\ind_{(A_1\times\cdots\times A_k)\wr S_m}^{(A_1\wr S_m)\times\cdots\times(A_k\wr S_m)}(\alpha_1\times\cdots\times\alpha_k)\wr\chi_{\mu}=\sum_{\mu_1,\ldots,\mu_k\vdash m}\langle\chi_{\mu},\chi_{\mu_1}\cdots\chi_{\mu_k}\rangle_{S_m}(\alpha_1\wr\chi_{\mu_1})\times\cdots\times(\alpha_k\wr\chi_{\mu_k})
\end{equation}
\end{lem}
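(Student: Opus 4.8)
The plan is to prove the identity directly at the level of characters, by evaluating both sides on an arbitrary element of the larger group and checking that they agree. This treats arbitrary class functions $\alpha_i$ uniformly and avoids any reduction to the irreducible case (which would be awkward, since $\alpha\mapsto\alpha\wr\chi_\mu$ is not linear). Write $G=(A_1\wr S_m)\times\cdots\times(A_k\wr S_m)$ and $H=(A_1\times\cdots\times A_k)\wr S_m$, embedded so that the single $S_m$ in $H$ sits diagonally in the $k$ copies of $S_m$ in $G$. The structural fact I would exploit is that $H$ and $G$ share the same base subgroup $N=A_1^m\times\cdots\times A_k^m$, normal in both, with $G/N\cong S_m^k$ and $H/N\cong S_m$ the diagonal.

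The first ingredient is the standard formula for the character of a representation of the form $\alpha\wr\theta$: on $(a_1,\ldots,a_m;\sigma)\in A\wr S_m$ it equals $\theta(\sigma)\prod_{z}\alpha(a_z)$, the product running over the cycles $z$ of $\sigma$, where $a_z$ is the product of the $a_i$ around $z$ (its conjugacy class, hence the value $\alpha(a_z)$, is well defined). Using this I would evaluate the right-hand side on a general $b=(b_1,\ldots,b_k)$ with $b_i=(g^{(i)};\sigma_i)$. It factors as $\big(\prod_i\prod_{z}\alpha_i(g^{(i)}_z)\big)\cdot\Sigma$, where $\Sigma=\sum_{\mu_1,\ldots,\mu_k\vdash m}\langle\chi_\mu,\chi_{\mu_1}\cdots\chi_{\mu_k}\rangle\prod_i\chi_{\mu_i}(\sigma_i)$. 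Writing the inner product as $\frac{1}{m!}\sum_\tau\chi_\mu(\tau)\prod_i\chi_{\mu_i}(\tau)$ and summing over the $\mu_i$ by column orthogonality of the $S_m$ character table — exactly the manipulation already used in the Hilbert-series computation — collapses $\Sigma$ to $|C_{S_m}(\sigma_1)|^{k-1}\chi_\mu(\sigma_1)$ when $\sigma_1,\ldots,\sigma_k$ are mutually conjugate in $S_m$, and to $0$ otherwise.

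For the left-hand side I would use the Frobenius formula $(\ind_H^G\beta)(b)=\frac{1}{|H|}\sum_{x\in G:\,x^{-1}bx\in H}\beta(x^{-1}bx)$ with $\beta=(\alpha_1\times\cdots\times\alpha_k)\wr\chi_\mu$. Since $H$ and $G$ differ only after passing to the quotient by the common base $N$, the condition $x^{-1}bx\in H$ constrains only the $S_m$-parts of $x=(x_1,\ldots,x_k)$: writing $w_i$ for the $S_m$-part of $x_i$, it reads $w_1^{-1}\sigma_1w_1=\cdots=w_k^{-1}\sigma_kw_k$, the base components of $x$ being unconstrained. The two decisive observations are then (i) this is solvable only when the $\sigma_i$ are mutually conjugate, matching the vanishing found above, and (ii) on admissible $x$ the value $\beta(x^{-1}bx)$ is independent of $x$: conjugation permutes the cycles and conjugates the cycle products, so by the cycle-product formula and the fact that the $\alpha_i$ and $\chi_\mu$ are class functions one gets $\beta(x^{-1}bx)=\chi_\mu(\sigma_1)\prod_i\prod_{z}\alpha_i(g^{(i)}_z)$ for every admissible $x$. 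The induced character thus reduces to a pure count of admissible $x$.

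That count is elementary: the free base components contribute $|N|=\prod_i|A_i|^m$, while for each choice of the common conjugate $\tau$ the admissible tuples $(w_1,\ldots,w_k)$ number $|C_{S_m}(\sigma_1)|^{k}$, and $\tau$ ranges over a class of size $m!/|C_{S_m}(\sigma_1)|$; hence $|\{x:x^{-1}bx\in H\}|=|N|\,m!\,|C_{S_m}(\sigma_1)|^{k-1}$. Dividing by $|H|=|N|\,m!$ leaves the factor $|C_{S_m}(\sigma_1)|^{k-1}$, and multiplying by the $x$-independent value of $\beta$ reproduces exactly the right-hand side. I expect the main obstacle to be purely organisational: fixing the cycle-product formula and the cycle correspondence induced by conjugation carefully enough that the invariance of $\beta(x^{-1}bx)$ and the enumeration of the $(w_1,\ldots,w_k)$ are rigorous; the representation-theoretic content beyond that is light.
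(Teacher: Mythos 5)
Your proposal is correct, and it follows a genuinely different route from the paper. The paper first proves the two-factor case ($k=2$) by pairing both sides against indicator functions of conjugacy classes of $(A\wr S_m)\times(B\wr S_m)$: the right-hand side collapses via orthogonality of $S_m$-characters, while the left-hand side, after Frobenius reciprocity, becomes a sum over all types $\varrho\in P_m(A_*\times B_*)$ with prescribed marginals $(\varrho_A,\varrho_B)$, and the crux is the counting identity $z_\lambda^{-1}\sum_\varrho Z_\varrho^{-1}=Z_{\varrho_A}^{-1}Z_{\varrho_B}^{-1}$, proved by an explicit conjugation argument; the general case is then obtained by induction on $k$ using transitivity of induction. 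You instead treat all $k$ factors at once and work pointwise: the cycle-product formula plus column orthogonality gives the value $|C_{S_m}(\sigma_1)|^{k-1}\chi_\mu(\sigma_1)\prod_i\prod_z\alpha_i(g^{(i)}_z)$ (or $0$) for the right-hand side, while the Frobenius formula for induced class functions reduces the left-hand side to counting conjugators $(w_1,\ldots,w_k)$ with $w_1^{-1}\sigma_1 w_1=\cdots=w_k^{-1}\sigma_k w_k$, and your count $|N|\,m!\,|C_{S_m}(\sigma_1)|^{k-1}$ is right. The two approaches share the same ingredients (the wreath-product character formula, $S_m$-orthogonality, and a conjugation count), and indeed evaluating at an element is equivalent to pairing with its class indicator; but your organization buys uniformity in $k$ (no induction, no two-factor base case) and avoids enumerating partition-valued functions with prescribed marginals, at the cost of having to set up carefully the cycle correspondence under conjugation — exactly the bookkeeping you flag. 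One point to make explicit when writing it up: for class functions $\alpha_i$ that are not characters, $(\alpha_1\times\cdots\times\alpha_k)\wr\chi_\mu$ must be \emph{defined} by the cycle-product formula (as in Macdonald, and as recalled in the paper's proofs section), since the operation is not linear in the inner argument; with that definition your invariance claim for $\beta(x^{-1}bx)$ is exactly the statement that the type is a conjugacy invariant, and the argument is complete.
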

The proof can be found in sec. \ref{sec:proofs}

Using this result the dimension of the degree $m$ homogeneous subspace of
$I_{k,(\lambda_1,\ldots,\lambda_k)}$ can be rewritten as
\begin{thm}
\begin{equation}\label{eq:stabdim}
d_m=\langle\ind_{H_m}^{G_m}(\chi_{\lambda_1}\times\cdots\times\chi_{\lambda_k})\wr 1,\ind_{H_m}^{G_m}(\chi_{\lambda_1}\times\cdots\times\chi_{\lambda_k})\wr 1\rangle
\end{equation}
where $G_m=S_{m|\lambda_1|}\times\cdots\times S_{m|\lambda_k|}$ and
$H_m=(S_{|\lambda_1|}\times\cdots\times S_{|\lambda_k|})\wr S_m$.
\end{thm}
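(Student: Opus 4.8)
The plan is to recognize the double sum defining $d_m$ as the squared norm of a single induced character, using Lemma \ref{lem:wrindmany} to collapse the inner summation over $\mu_1,\ldots,\mu_k$ and the transitivity of induction to merge the two induction steps. The starting point is the last displayed expression for $d_m$, an inner product over $G_m$ of two class functions, each presented as a sum indexed by $(\mu_1,\ldots,\mu_k)$ and $(\mu'_1,\ldots,\mu'_k)$ respectively. Since the inner product on class functions is bilinear and the induction map $\ind_{H_m'}^{G_m}$ is additive, I would first pull both finite summations inside the respective arguments of the inner product, so that each argument becomes $\ind_{H_m'}^{G_m}$ applied to the single class function $\sum_{\mu_1,\ldots,\mu_k\vdash m}\langle 1,\chi_{\mu_1}\cdots\chi_{\mu_k}\rangle_{S_m}(\chi_{\lambda_1}\wr\chi_{\mu_1})\times\cdots\times(\chi_{\lambda_k}\wr\chi_{\mu_k})$ on $H_m'$.

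The key step is to identify this class function via Lemma \ref{lem:wrindmany}. Taking $A_i=S_{|\lambda_i|}$, $\alpha_i=\chi_{\lambda_i}$, and $\mu=(m)$ -- so that $\chi_\mu=1$ is the trivial character of $S_m$ -- the right-hand side of the lemma is exactly the sum above, while the left-hand side is $\ind_{H_m}^{H_m'}(\chi_{\lambda_1}\times\cdots\times\chi_{\lambda_k})\wr 1$. Here I use that the source group $(S_{|\lambda_1|}\times\cdots\times S_{|\lambda_k|})\wr S_m$ is precisely $H_m$ and that the target $(S_{|\lambda_1|}\wr S_m)\times\cdots\times(S_{|\lambda_k|}\wr S_m)$ is $H_m'$, with $H_m$ sitting inside $H_m'$ as the subgroup whose $S_m$-part is embedded diagonally across the $k$ wreath factors.

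With this identification each argument of the inner product becomes $\ind_{H_m'}^{G_m}\ind_{H_m}^{H_m'}(\chi_{\lambda_1}\times\cdots\times\chi_{\lambda_k})\wr 1$. Applying transitivity of induction along the chain $H_m\le H_m'\le G_m$, this collapses to $\ind_{H_m}^{G_m}(\chi_{\lambda_1}\times\cdots\times\chi_{\lambda_k})\wr 1$, and substituting into the inner product yields \eqref{eq:stabdim} directly. I expect the only genuine care to lie in the bookkeeping of the middle step: confirming that $\chi_{(m)}$ is the trivial character so that the coefficients $\langle 1,\chi_{\mu_1}\cdots\chi_{\mu_k}\rangle_{S_m}$ match those appearing in the lemma, and that the diagonal embedding $H_m\hookrightarrow H_m'$ is the one implicitly used both in Lemma \ref{lem:wrindmany} and in the derivation of the factors $(\chi_{\lambda_i}\wr\chi_{\mu_i})$ above. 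The substantive representation-theoretic content has already been isolated in Lemma \ref{lem:wrindmany}, so the remaining argument is essentially formal.
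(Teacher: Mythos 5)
Your proposal is correct and follows exactly the route the paper intends: the theorem is stated there as an immediate consequence of the preceding computation of $d_m$, with Lemma \ref{lem:wrindmany} applied for $\mu=(m)$ (trivial character) to collapse the sums over $\mu_1,\ldots,\mu_k$ into $\ind_{H_m}^{H_m'}(\chi_{\lambda_1}\times\cdots\times\chi_{\lambda_k})\wr 1$, followed by transitivity of induction along $H_m\le H_m'\le G_m$. You have merely made explicit the linearity and transitivity bookkeeping that the paper leaves implicit.
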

Note that in the $m=0$ case $G_m=H_m$ is the trivial group, and therefore $d_0=1$
corresponding to the one dimensional space of constant polynomials.

In the special case $\lambda_1=\lambda_2=\ldots=\lambda_k=(1)$ the groups
reduce to $G_m=S_m^k$ and $H_m=S_m$, and we have that $d_m$ equals the
number of orbits of $S_m^k$ under $S_m\times S_m$ acting via left and right
multiplication, or equivalently, the number of orbits of $S_m^{k-1}$ under
$S_m$ acting by simultaneous conjugation \cite{HW}.

In the remaining sections we will restrict ourselves to the study of the
cases where the Ferrers diagrams of the appearing partitions have either
a single row or a single column, which corresponds to a multipartite quantum
system with various types of bosonic and fermionic particles. The next two
sections deal with the simplest case of a system of identical bosons.

\section{Integer stochastic matrices and regular bipartite graphs}\label{sec:mat}

Following the strategy of refs. \cite{HWW,Vrana}, in this section we introduce certain combinatorial
objects which can be used to conveniently label a basis of $I_{1,((l))}$.

An integer stochastic matrix is a square matrix with nonnegative integer entries
and such that the sum of entries in each row and in each column is the same.
Clearly, if $M$ and $N$ are two such matrices with line sums $l$, then the
block-diagonal matrix
\begin{equation}
M\oplus' N:=\left[\begin{array}{cc}
M & 0  \\
0 & N
\end{array}\right]
\end{equation}
built from them is also an integer stochastic matrix with line sums $l$.
We will call integer stochastic matrices differing only in a permutation of
rows and columns equivalent. Clearly, this is an equivalence relation and
$\oplus'$ gives rise to a well defined operation on the equivalence classes,
to be denoted by $\oplus$.

A matrix with nonnegative integer entries may also be thought of as biadjacency
matrix of a bipartite graph $G=(V_1,V_2,E)$ on a labelled vertex set, say $V_1=\{r_1,r_2,\ldots,r_m\}$ and $V_2=\{c_1,c_2,\ldots,c_{m'}\}$ (and possibly with multiple edges) together with
a fixed order of colour classes, and this correspondance is a bijection. Our
convention will be that rows correspond to vertices in $V_1$ while columns
correspond to vertices in $V_2$. Clearly, the biadjacency matrix is an $m\times m$
integer stochastic matrix iff the corresponding graph is $l$-regular and $|V_1|=|V_2|=m$.
Under this map the binary operation $\oplus$ corresponds to the disjoint union
of bipartite graphs, which we define so that we keep track of the ordering of
colour classes, i.e. for $G=(V_1,V_2,E)$ and $G'=(V_1',V_2',E')$ we have
$G\sqcup G'=(V_1\sqcup V_1',V_2\sqcup V_2',E\sqcup E')$. Note that this is
important as the bipartition of a disconnected bipartite graph is not unique.
Passing to equivalence classes of the integer stochastic matrices corresponds
to forgetting the labels of vertices.

We will make use of another description in terms of permutations. Let $p\in S_{lm}$
be a permutation of the numbers $\{1,2,\ldots,lm\}$. To $p$ we can associate an
$l$-regular $m$ by $m$ bipartite graph as follows. Let $V_1=\{r_1,r_2,\ldots,r_m\}$
and $V_2=\{c_1,c_2,\ldots,c_m\}$ be the two vertex sets, and for each $i\in\{1,2,\ldots,lm\}$
add an edge joining $r_{\lceil\frac{i}{l}\rceil}$ with $c_{\lceil\frac{p(i)}{l}\rceil}$
where $\lceil x\rceil$ denotes the smallest integer not less than $x$. In other words,
there are $e$ edges joining $r_i$ with $c_j$ iff $e$ of numbers in the range $\{(i-1)l+1,\ldots,il\}$
are mapped by $p$ into the range $\{(j-1)l+1,\ldots,jl\}$.

It is not hard to see that $p,p'\in S_{lm}$ are mapped to the same graph with
labelled vertices iff $p'=apb$ with $a,b\in S_l^m\le S_{lm}$ regarded as the subgroup
whose elements permute the numbers $\{li+1,li+2,\ldots,li+l\}$ among themselves
for each $0\le i\le m-1$. Reordering the labels of the vertices amounts to left
and right multiplication with an element of $S_m\le S_{lm}$ regarded as the subgroup
permuting the $m$ blocks of numbers $\{li+1,li+2,\ldots,li+l\}$ with $i=0,\ldots,m-1$
without reordering the numbers inside the blocks. The two subgroups generate the
normalizer of the former in $S_{lm}$, which is isomorphic to $S_l\wr S_m$. In the
following we will always assume $S_l\wr S_m$ to be this subgroup in $S_{lm}$.

To sum up, equivalence classes of $m\times m$ integer stochastic matrices with
line sum $l$ are in bijection with $l$-regular bipartite graphs with a fixed
bipartition into two $m$-element vertex sets, which in turn are in bijection
with elements of the double coset space $(S_l\wr S_m)\backslash S_{lm}/(S_l\wr S_m)$.
Fig. \ref{fig:l3m3} shows the five possible graphs in the $l=m=3$ case.

\begin{figure}[htb]
\centering
\includegraphics{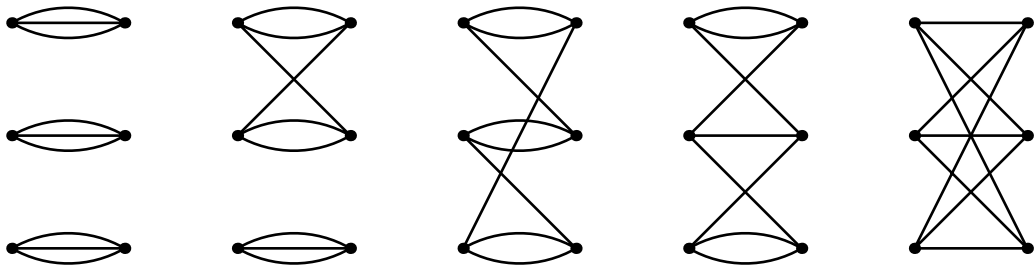}
\caption{All five $3$-regular bipartite graphs on six vertices. Representatives of the corresponding equivalence classes of permutations (from left to right) are: $(1)(2)(3)(4)(5)(6)(7)(8)(9)$, $(1)(2)(3)(4)(5)(67)(8)(9)$, $(1)(2)(34)(5)(6)(7)(8)(9)$, $(1)(2)(34)(57)(68)(9)$, $(1)(24)(37)(5)(68)(9)$. \label{fig:l3m3}}
\end{figure}

\section{Algebraically independent generators of the algebra of LU-invariants for pure states of bosons}\label{sec:alggen}

Our next aim is to prove that $I_{1,((l))}$ is free by presenting an algebraically
independent generating set.
Let $\mathcal{H}$ be a Hilbert space, and $l\in\mathbb{N}$, and consider
the $l$-boson state space $S^{l}(\mathcal{H})$. Invariant polynomial functions on
this space are in bijection with elements of the space
\begin{equation}
\begin{split}
S(S^l(\mathcal{H})\oplus S^l(\mathcal{H}^*))^{U(n,\mathbb{C})}
  & \simeq \bigoplus_{m\in\mathbb{N}}(S^m(S^l(\mathcal{H}))\otimes S^m(S^l(\mathcal{H}^*))^{U(n,\mathbb{C})}  \\
  & \simeq \bigoplus_{m\in\mathbb{N}}\End(S^m(S^l(\mathcal{H})))^{U(n,\mathbb{C})}  \\
  & \simeq \bigoplus_{m\in\mathbb{N}}\End_{U(n,\mathbb{C})}(S^m(S^l(\mathcal{H})))
\end{split}
\end{equation}
Note that $S^m(S^l(\mathcal{H}))$ can be regarded as a subspace of
$\mathcal{H}^{\otimes lm}$, and as this space comes equipped with an inner
product induced from that of $\mathcal{H}$, we also have the orthogonal
projection $\mathcal{H}^{\otimes lm}\to S^m(S^l(\mathcal{H}))$. This implies
that $\End(S^m(S^l(\mathcal{H})))$ can be viewed both as a subspace and as
a quotient of $\End(S^m(S^l(\mathcal{H})))$.

Recall that $GL(\mathcal{H})$ acts on $\mathcal{H}^{\otimes lm}$ by the defining
representation on each factor and on this space there is an action of $S_{lm}$
permuting the factors and these two actions commute. Schur-Weyl duality states
that the algebra homomorphism $\mathbb{C}S_{lm}\to\End_{GL(\mathcal{H})}(\mathcal{H}^{\otimes lm})$
is surjective, and it is injective iff $\dim \mathcal{H}\ge lm$. In these formulae
we can also write $U(\mathcal{H})$ instead of $GL(\mathcal{H})$.

Therefore we also have a surjection from the group algebra $\mathbb{C}S_{lm}$
to the space of degree $m$ polynomial invariants. Let $\{e_1,\ldots,e_n\}$ be
an orthonormal basis of $\mathcal{H}$, and $\{e_1^*,\ldots,e_n^*\}$ its dual
basis. Vectors of the form
\begin{equation}
e_{j_1}\vee e_{j_2}\vee\cdots\vee e_{j_l}:=\frac{1}{n!}\sum_{\pi\in S_l}e_{j_{\pi(1)}}\otimes e_{j_{\pi(2)}}\otimes\cdots\otimes e_{j_{\pi(l)}}
\end{equation}
with $1\le j_1\le\cdots\le j_l\le n$ form a basis of $S^l(\mathcal{H})$.
Then our surjection is defined on the group elements as
\begin{multline}
\sigma\mapsto f_{[\sigma]}:=\sum_{j_1,\ldots,j_{lm}=1}^{n}(e_{j_1}\vee\cdots\vee e_{j_k})(e_{j_{l+1}}\vee\cdots\vee e_{j_{2l}})\cdots(e_{j_{(m-1)l+1}}\vee\cdots\vee e_{j_{ml}})\cdot  \\
\cdot(e^*_{j_{\sigma(1)}}\vee\cdots\vee e^*_{j_{\sigma(l)}})\cdots(e^*_{j_{\sigma((m-1)l+1)}}\vee\cdots\vee e^*_{j_{\sigma(ml)}})
\end{multline}
and extended linearly. As the symmetrized product $\vee$ as well as the product
in $S(S^l(\mathcal{H}))$ is commutative, and we sum over every possible $lm$-tuple
of integers (in particular, over permutations of any fixed $lm$-tuple), we have
that the image of $\sigma$ under this map is the same as the image of any element
in the double coset $[\sigma]:=(S_l\wr S_m)\sigma(S_l\wr S_m)$. In other words, we have a
well-defined surjection from a vector space freely generated by the double coset
space $(S_l\wr S_m)\backslash S_{lm}/(S_l\wr S_m)$ to the space of degree $m$
invariant polynomials.

But the dimension of the two spaces can be seen to be equal, so that this surjection
is in fact an isomorphism if $n\ge lm$. Indeed, in this special case eq. (\ref{eq:stabdim}) reduces to
\begin{equation}
d_m = \langle \ind_{S_l\wr S_m}^{S_{lm}}(1\wr 1),\ind_{S_l\wr S_m}^{S_{lm}}(1\wr 1)\rangle = |(S_l\wr S_m)\backslash S_{lm}/(S_l\wr S_m)|
\end{equation}
using the well-known formula $|H\backslash G/H|=\langle\ind_H^G 1,\ind_H^G 1\rangle$ for the number of double cosets.

What we have seen so far is that a basis of $I_{1,((l))}$ can be labelled by
$l$-regular bipartite graphs or equivalently by integer stochastic matrices
with line sums $l$ up to permutation of rows and columns. It is not hard to
see (and will be proved later in a more general setting) that under these
bijections multiplication of the basis elements corresponds to the 
disjoint union and to the operation $\oplus$ defined above, respectively, implying
that any element of the basis can be uniquely written as the product of basis
elements corresponding to \emph{connected} graphs. Our results
are summarized in the following
\begin{thm}
$I_{1,((l))}$ is a free algebra, and an algebraically independent generating
set is $\{f_{s}|s\in S\}$ where $S$ is the set of elements of
$\bigsqcup_{m\in\mathbb{N}}(S_l\wr S_m)\backslash S_{lm}/(S_l\wr S_m)$
corresponding to connected graphs by the above bijection.
\end{thm}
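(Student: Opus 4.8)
The plan is to establish two things: first that the proposed generating set generates the whole algebra, and second that the generators are algebraically independent. The setup already gives us a basis of $I_{1,((l))}$ indexed by $l$-regular bipartite graphs (equivalently by double cosets in $(S_l\wr S_m)\backslash S_{lm}/(S_l\wr S_m)$), and it asserts that multiplication of basis elements corresponds to disjoint union of graphs under $\sqcup$ (equivalently to the operation $\oplus$ on integer stochastic matrices). Granting this multiplicative structure, every graph decomposes uniquely into connected components, so every basis element $f_{[\sigma]}$ is uniquely a product of the generators $f_s$ with $s\in S$ corresponding to connected graphs. This immediately yields that $S$ generates, and in fact that the monomials in the $f_s$ are precisely the basis elements, which is exactly the statement of freeness.

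First I would carefully verify the claim that multiplication corresponds to disjoint union. Given two double cosets represented by $\sigma\in S_{l m}$ and $\tau\in S_{l m'}$, the product $f_{[\sigma]}\cdot f_{[\tau]}$ is a polynomial of degree $m+m'$, and I would check directly from the defining formula that it equals $f_{[\sigma\oplus'\tau]}$, where $\sigma\oplus'\tau\in S_{l(m+m')}$ acts as $\sigma$ on the first $lm$ symbols and as $\tau$ on the remaining $lm'$. This is a matter of matching the symmetrized monomials: the product of the two sums over basis indices is a single sum over an $l(m+m')$-tuple, and the block structure of $\sigma\oplus'\tau$ reproduces exactly the row/column incidence data of the disjoint union of the two graphs. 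Since the biadjacency matrix of a disjoint union is the block-diagonal matrix $M\oplus' N$, and since the correspondence between permutations, graphs, and matrices was set up in section \ref{sec:mat}, this identifies the product on basis elements with $\sqcup$ on graphs.

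Next I would record the unique factorization of graphs. Every finite bipartite graph is the disjoint union of its connected components in an essentially unique way (unique up to reordering the components), and under the ordering-forgetting passage to equivalence classes this becomes a genuinely unique multiset of connected $l$-regular bipartite graphs. Combined with the previous step, this says that the map sending a finite multiset $\{s_1,\ldots,s_r\}$ of elements of $S$ to the product $f_{s_1}\cdots f_{s_r}$ is a bijection onto the distinguished basis of $I_{1,((l))}$. A commutative algebra admitting a basis consisting of all monomials in a subset $S$ of elements, with distinct monomials giving distinct basis vectors, is by definition the free commutative (polynomial) algebra on $S$; this gives both generation and algebraic independence simultaneously.

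The main obstacle I expect is the direct verification in the second paragraph that $f_{[\sigma]}\cdot f_{[\tau]}=f_{[\sigma\oplus'\tau]}$, since one must be careful that the well-definedness on double cosets (invariance under the $S_l\wr S_m$ actions on each side) is compatible across the two blocks, and that no extra cross-terms appear when the two index sums are merged. One subtle point is that the disjoint union operation on \emph{labelled} bipartite graphs depends on the ordering of the colour classes, as emphasized in section \ref{sec:mat}; I would need to confirm that after passing to equivalence classes (forgetting labels) the operation descends to the well-defined commutative operation $\oplus$, so that the resulting multiplication on the basis is genuinely commutative and associative, matching the algebra structure. Once this bookkeeping is settled, the freeness follows formally from unique factorization into connected components.
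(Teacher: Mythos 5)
Your proposal is correct and follows essentially the same route as the paper: the paper also takes the graph-labelled basis from the dimension count, invokes the fact that multiplication of basis invariants corresponds to disjoint union of graphs (its Lemma \ref{lem:product}, proved in the more general setting of section \ref{sec:various} by exactly the index-merging computation you sketch), and concludes freeness from the unique decomposition of a graph into connected components. The bookkeeping points you flag (well-definedness on double cosets and the ordered bipartition in $\sqcup$) are the same ones the paper handles in section \ref{sec:mat} and in the proof of that lemma.
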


\section{Systems of different kinds of indistinguishable particles}\label{sec:various}

We turn to the local unitary invariants of the most general quantum systems
containing various types of bosonic, fermionic and distinguishable particles.

The starting point is the stable dimension formula of eq. \ref{eq:stabdim}.
In the present case $k$ is arbitrary while the partitions $\lambda_1,\ldots,\lambda_k$
are either of the form $(l_i)$ or $(1^{l_i})$ corresponding to bosons or fermions
(so that $l_i=|\lambda_i|$).

Let $G_m=S_{ml_1}\times\cdots\times S_{ml_k}$ and $H_m=(S_{l_1}\times\cdots\times S_{l_k})\wr S_m\le G_m$
as before. The degree $m$ subspace of $I_{k,(\lambda_1,\ldots,\lambda_k)}$ is
\begin{equation}\label{eq:dmcosets}
\begin{split}
d_m
 & = \langle\ind_{H_m}^{G_m}(\chi_{\lambda_1}\times\cdots\times\chi_{\lambda_k})\wr 1,\ind_{H_m}^{G_m}(\chi_{\lambda_1}\times\cdots\times\chi_{\lambda_k})\wr 1\rangle_{G_m}  \\
 & = \langle\res_{G_m}^{H_m}\ind_{H_m}^{G_m}(\chi_{\lambda_1}\times\cdots\times\chi_{\lambda_k})\wr 1,(\chi_{\lambda_1}\times\cdots\times\chi_{\lambda_k})\wr 1\rangle_{H_m}  \\
 & = \sum_{s\in {H_m}\backslash {G_m}/{H_m}}\langle((\chi_{\lambda_1}\times\cdots\times\chi_{\lambda_k})\wr 1)^s,\res_{H_m}^{{H_m}_s}(\chi_{\lambda_1}\times\cdots\times\chi_{\lambda_k})\wr 1\rangle_{{H_m}_s}
\end{split}
\end{equation}
by Mackey's theorem\cite{Simon} where ${H_m}_s={H_m}\cap s{H_m}s^{-1}$ and $f^s(h)=f(s^{-1}hs)$ and in
the sum $s$ runs over a system of representatives of the set of double cosets.
Note that the inner product indeed depends on $s$ only through the double coset
${H_m}s{H_m}$.

Observe that each term in the sum is either $0$ or $1$, being the inner product
of one dimensional and hence irreducible characters, therefore
\begin{equation}\label{eq:dmbound}
d_m\le|{H_m}\backslash {G_m}/{H_m}|
\end{equation}
holds. In the following we will give a set of $|{H_m}\backslash {G_m}/{H_m}|$ invariants
spanning the degree $m$ homogeneous subspace, and we shall see that when
strict inequality holds in eq. (\ref{eq:dmbound}), some invariants become zero while
the remaining ones form a basis.

Let $n=(n_1,\ldots,n_k)$ be a fixed $k$-tuple of integers such that $n_i\ge ml_i$.
First observe that $S^m(\mathbb{S}_{\lambda_1}(\mathbb{C}^{n_1})\otimes\cdots\otimes\mathbb{S}_{\lambda_k}(\mathbb{C}^{n_k}))$
can be thought of both as a subrepresentation and as a quotient of $(\mathbb{C}^{n_1})^{\otimes ml_1}\otimes\cdots\otimes(\mathbb{C}^{n_k})^{\otimes ml_k}$ via symmetrization/antisymmetrization. Hence we have a surjection
\begin{equation}
\End((\mathbb{C}^{n_1})^{\otimes ml_1}\otimes\cdots\otimes(\mathbb{C}^{n_k})^{\otimes ml_k})\to\End(S^m(\mathbb{S}_{\lambda_1}(\mathbb{C}^{n_1})\otimes\cdots\otimes\mathbb{S}_{\lambda_k}(\mathbb{C}^{n_k})))
\end{equation}
which is $LU_n$-equivariant, therefore we also have a surjection
\begin{equation}
(\End((\mathbb{C}^{n_1})^{\otimes ml_1}\otimes\cdots\otimes(\mathbb{C}^{n_k})^{\otimes ml_k}))^{LU_n}\to(\End(S^m(\mathbb{S}_{\lambda_1}(\mathbb{C}^{n_1})\otimes\cdots\otimes\mathbb{S}_{\lambda_k}(\mathbb{C}^{n_k}))))^{LU_n}
\end{equation}
Here the right hand side can be identified with the degree $m$ homogeneous subspace
of $I_{k,(\lambda_1,\ldots,\lambda_k)}$.

Let $e_1,e_2,\ldots$ denote the standard basis of $\mathbb{C}^n$ and let us introduce
the following notation:
\begin{equation}
e_{\lambda,i_1i_2\ldots i_l}=\frac{1}{l!}\sum_{\pi\in S_l}\chi_\lambda(\pi)e_{i_{\pi(1)}}\otimes e_{i_{\pi(2)}}\otimes\cdots\otimes e_{i_{\pi(l)}}
\end{equation}
so that $\lambda=(l)$ ($\lambda=(1^l)$) corresponds to symmetrized (antisymmetrized)
tensor products, respectively. Elements of the dual bases will be denoted by
$e^*_1,e^*_2,e^*_{\lambda,i_1i_2\ldots i_l}$ etc.

By Schur-Weyl duality we have the surjection
\begin{equation}
\mathbb{C}(S_{ml_1}\times\cdots\times S_{ml_k})\to(\End((\mathbb{C}^{n_1})^{\otimes ml_1}\otimes\cdots\otimes(\mathbb{C}^{n_k})^{\otimes ml_k}))^{LU_n}
\end{equation}
defined by
\begin{multline}\label{eq:tensorinv}
(\sigma_1,\ldots,\sigma_k)
 \mapsto\sum_{\substack{i_1^1,\ldots,i_1^{ml_1} \\ \vdots \\ i_k^1,\ldots,i_k^{ml_k}}}(e_{i_1^1}\otimes e_{i_1^2}\otimes\cdots\otimes e_{i_1^{ml_1}}\otimes\cdots\otimes e_{i_k^1}\otimes e_{i_k^2}\otimes\cdots\otimes e_{i_k^{ml_k}})\otimes  \\
 \otimes(e^*_{i_1^{\sigma_1(1)}}\otimes e^*_{i_1^{\sigma_1(2)}}\otimes\cdots\otimes e^*_{i_1^{\sigma_1(ml_1)}}\otimes\cdots\otimes e^*_{i_k^{\sigma_k(1)}}\otimes e^*_{i_k^{\sigma_k(2)}}\otimes\cdots\otimes e^*_{i_k^{\sigma_k(ml_k)}})
\end{multline}
on $k$-tuples of permutations and extended linearly. Composing this with the
surjection above we have the following map:
\begin{multline}\label{eq:fdef}
(\sigma_1,\ldots,\sigma_k)\mapsto f_{[\sigma_1,\ldots,\sigma_k]}:=\sum(e_{\lambda_1,i_1^1i_1^2\ldots i_1^{l_1}}\otimes e_{\lambda_2,i_2^1i_2^2\ldots i_2^{l_2}}\otimes\cdots\otimes e_{\lambda_k,i_k^1i_k^2\ldots i_k^{l_k}})\cdots \\
(e_{\lambda_1,i_1^{(m-1)l_1+1}i_1^{(m-1)l_1+2}\ldots i_1^{ml_1}}\otimes \cdots\otimes e_{\lambda_k,i_k^{(m-1)l_k+1}i_k^{(m-1)l_k+2}\ldots i_k^{ml_k}})\cdot  \\
(e^*_{\lambda_1,i_1^{\sigma_1(1)}i_1^{\sigma_1(2)}\ldots i_1^{\sigma_1(l_1)}}\otimes e^*_{\lambda_2,i_2^{\sigma_2(1)}i_2^{\sigma_2(2)}\ldots i_2^{\sigma_2(l_2)}}\otimes\cdots\otimes e^*_{\lambda_k,i_k^{\sigma_k(1)}i_k^{\sigma_k(2)}\ldots i_k^{\sigma_k(l_k)}})\cdots \\
(e^*_{\lambda_1,i_1^{\sigma_1((m-1)l_1+1)}i_1^{\sigma_1((m-1)l_1+2)}\ldots i_1^{\sigma_1(ml_1)}}\otimes \cdots\otimes e^*_{\lambda_k,i_k^{\sigma_k((m-1)l_k+1)}i_k^{\sigma_k((m-1)l_k+2)}\ldots i_k^{\sigma_k(ml_k)}})
\end{multline}
where the sum is over all possible values of the indices: $1\le i_j^1,\ldots,i_j^{ml_j}\le n_j$
($\forall j\in\{1,\ldots,k\}$) and $[\sigma_1,\ldots,\sigma_k]$ denotes the
double coset $H_m(\sigma_1,\ldots,\sigma_k)H_m$.

Note that $e_{\lambda,i_1i_2\ldots i_l}$ does not change (up to sign) upon changing
the order of the appearing indices, and also the factors in each term can be reordered
in an arbirary way. Taking into account that we sum over every possible value of the
indices, the first transformation is realized when any of the $\sigma_j$ is
multiplied from either side by a permutation in the Young subgroup $S_{l_i}\times\cdots\times S_{l_i}\le S_{ml_i}$
while the latter amounts to simultaneous multiplication of the permutations $\sigma_1,\ldots,\sigma_k$
from the left or from the right by the subgroup of $S_{ml_1}\times\cdots\times S_{ml_k}$ isomorphic
to $S_m$ which permutes simultaneously the blocks fixed by these Young subgroups. But
the subgroup generated by these is precisely ${H_m}$ so that indeed we have an up-to-sign
well defined invariant for each element of ${H_m}\backslash {G_m}/{H_m}$. The sign can be fixed by
requiring the invariant to be positive for separable states.

This already implies that when in eq. (\ref{eq:dmbound}) equality holds, a
basis of the degree $m$ homogeneous subpace is obtained this way. Similarly to
the special case considered in sec. \ref{sec:alggen} we can describe ${H_m}\backslash {G_m}/{H_m}$
in a purely combinatorial way in terms of certain graphs as follows.

Let $(\sigma_1,\ldots,\sigma_k)\in {G_m}=S_{ml_1}\times\cdots\times S_{ml_k}$. Now we
can draw a bipartite graph with vertices ${r_1,r_2,\ldots,r_m,c_1,c_2,\ldots,c_m}$
and with edges of $k$ different colours, adding an edge of the $j$th colour connecting
$r_{\lceil\frac{i}{l_j}\rceil}$ with $c_{\lceil\frac{\sigma_j(i)}{l_j}\rceil}$ for
every $1\le i\le ml_j$. In other words, there are $e$ edges of colour $j$ joining
$r_i$ with $c_{i'}$ iff $e$ of numbers in the range $\{(i-1)l_j+1,\ldots,il_j\}$
are mapped by $\sigma_j$ into the range $\{(i'-1)l_j+1,\ldots,i'l_j\}$. Finally we
forget the labels of the vertices but keep the order of the colour classes, and
this way obtain a bijection between ${H_m}\backslash {G_m}/{H_m}$ and the set of
isomorphism classes of bipartite graphs with a fixed bipartition into two $m$-element
vertex sets and edges of $k$ different colours with the subgraph given by edges
of colour $j$ being $l_j$-regular. The set of these isomorphism classes will be
denoted by $\gr(k,(l_1,\ldots,l_k),m)$, and connected ones by $\grc(k,(l_1,\ldots,l_k),m)$

As an illustration fig. \ref{fig:bipartite} shows the graph obtained in the $k=2$,
$l_1=2$, $l_2=3$, $m=3$ case from the pair of permutations $((123564),(17896)(2)(3)(4)(5))\in S_6\times S_9$.
\begin{figure}[htb]
\centering
\includegraphics{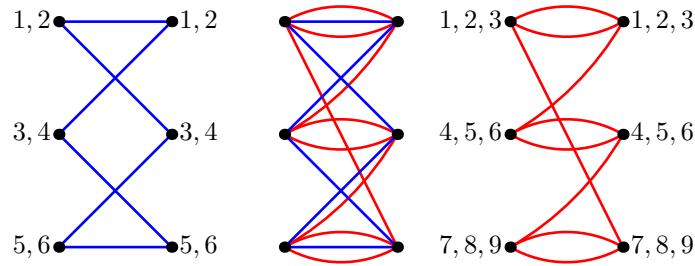}
\caption{The regular bipartite graph obtained from $(123564)\in S_6$ (left), $(17896)(2)(3)(4)(5)\in S_9$ (right) and the combined graph encoding the equivalence class of the pair in the $k=2$, $l_1=2$, $l_2=3$, $m=3$ case. (Colour online)\label{fig:bipartite}}
\end{figure}
Note also that the $l_1=l_2=\ldots=l_k=1$ special case corresponds to distinguishable
particles, and the resulting graphs can be identified (after merging pairs of vertices
along one of the colours) with the graph coverings of ref. \cite{HWW}.

Alternatively, we can identify the double cosets with orbits of $k$-tuples of
$m\times m$ integer stochastic matrices such that the $j$th matrix has line
sums $l_j$, under simultaneous multiplication from the left or from the right
by permutation matrices.

The next step is to understand multiplication in $I_{k,\lambda_1,\ldots,\lambda_k}$
in terms of graphs, i.e. to find the structure constants with respect to the
basis given above.
\begin{defn}
Let $\gr(k,(l_1,\ldots,l_k))$ denote the disjoint union
$\bigsqcup_{m\in\mathbb{N}}\gr(k,(l_1,\ldots,l_k),m)$.
This set comes equipped with an operation $\sqcup$ induced by the disjoint union of graphs.
Similarly, let $\grc(k,(l_1,\ldots,l_k))$ denote the disjoint union
$\bigsqcup_{m\in\mathbb{N}}\grc(k,(l_1,\ldots,l_k),m)$.

Let $\star$ denote the operation defined on $\bigsqcup_{m\in\mathbb{N}}H_m\backslash G_m/H_m$
as follows. For $(\sigma_1,\ldots,\sigma_k)\in G_m$ and $(\sigma'_1,\ldots,\sigma'_k)\in G_{m'}$
the usual inclusions $S_{ml_i}\times S_{m'l_i} \hookrightarrow S_{ml_i+m'l_i}$
of Young subgroups determine an element $(\pi_1,\ldots,\pi_k)$ in $G_{m+m'}$. We
define $[\sigma_1,\ldots,\sigma_k]\star[\sigma'_1,\ldots,\sigma'_k]$
to be the double coset $[\pi_1,\ldots,\pi_k]$.
\end{defn}
It is easy to see that this operation is well defined, and turns $\bigsqcup_{n\in\mathbb{N}}H_m\backslash G_m/H_m$
into a commutative monoid (the identity being the only element of $H_0\backslash G_0/H_0$).
In addition, the map $\varphi:\bigsqcup_{m\in\mathbb{N}}H_m\backslash G_m/H_m\to\gr(k,(l_1,\ldots,l_k))$
described above becomes this way an isomorphism of monoids. As $\gr(k,(l_1,\ldots,l_k))$
is freely generated by the subset $\grc(k,(l_1,\ldots,l_k))$,
the former is also freely generated by preimages of connected graphs.

\begin{lem}\label{lem:product}
Let $(\sigma_1,\ldots,\sigma_k)\in G_m$ and $(\sigma'_1,\ldots,\sigma'_k)\in G_{m'}$ be
arbitrary elements. Then
\begin{equation}
f_{[\sigma_1,\ldots,\sigma_k]}\cdot f_{[\sigma'_1,\ldots,\sigma'_k]}=f_{[\sigma_1,\ldots,\sigma_k]\star[\sigma'_1,\ldots,\sigma'_k]}
\end{equation}
\end{lem}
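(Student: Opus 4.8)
The plan is to verify the identity directly from the defining formula~\eqref{eq:fdef}, showing that the explicit polynomial obtained by multiplying $f_{[\sigma_1,\ldots,\sigma_k]}$ and $f_{[\sigma'_1,\ldots,\sigma'_k]}$ coincides term-by-term with the polynomial attached to the combined permutation tuple $(\pi_1,\ldots,\pi_k)\in G_{m+m'}$ produced by the $\star$ operation. Since each $f$ is a genuine polynomial function on the Hilbert space (for $n$ large enough, $n_i\ge (m+m')l_i$), and since Lemma~\ref{lem:lowdimiso} guarantees that equality in high enough dimension implies equality in the inverse limit, it suffices to check the identity of polynomials for one sufficiently large choice of $n$. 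I would fix such an $n$ at the outset so that all three expressions are simultaneously nondegenerate.

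The key computational observation is that $f_{[\sigma_1,\ldots,\sigma_k]}$, as written in~\eqref{eq:fdef}, is built as a product over $m$ ``row blocks'' of symmetrized/antisymmetrized basis vectors paired against a product over $m$ row blocks of dual vectors permuted by the $\sigma_j$. When one multiplies two such polynomials, the product in the symmetric algebra $S(S^m(\cdots)\oplus\cdots)$ simply concatenates the lists of factors: the product $f_{[\sigma_1,\ldots,\sigma_k]}\cdot f_{[\sigma'_1,\ldots,\sigma'_k]}$ is a sum over two independent families of indices, with the first $m$ blocks governed by $\sigma_j$ and the next $m'$ blocks governed by $\sigma'_j$. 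The crucial point is that the index summations are over disjoint index sets (one may take the primed indices to range over the same $1,\ldots,n_j$ but summed independently), so the double sum factorizes exactly as the product of the two individual sums. I would then recognize this concatenated expression as precisely $f_{[\pi_1,\ldots,\pi_k]}$, where $\pi_j$ acts as $\sigma_j$ on the first $ml_j$ symbols and as a shifted copy of $\sigma'_j$ on the remaining $m'l_j$ symbols --- this is exactly the permutation the definition of $\star$ prescribes via the Young-subgroup inclusion $S_{ml_j}\times S_{m'l_j}\hookrightarrow S_{(m+m')l_j}$. Because $\pi_j$ never mixes the two ranges of symbols, no cross terms pairing a primed vector with an unprimed dual vector arise, which is what makes the factorization clean.

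The main obstacle I anticipate is bookkeeping rather than conceptual: I must confirm that the well-definedness already established (the value of $f$ depends only on the double coset, up to the sign fixed by positivity on separable states) is compatible with $\star$, i.e. that choosing different coset representatives $\sigma_j$ and $\sigma'_j$ and concatenating yields $\star$-equivalent combined permutations. This follows because the $H_m$- and $H_{m'}$-ambiguities embed into the $H_{m+m'}$-ambiguity under the same Young-subgroup inclusions, so left/right multiplication by $H_m\times H_{m'}$ on the factors maps to left/right multiplication by a subgroup of $H_{m+m'}$. I would handle the sign consistency by invoking the normalization convention stated just before the lemma (positivity on separable, i.e. product, states): both sides are manifestly positive on separable states since a separable state factorizes the sum accordingly, so the signs agree. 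With well-definedness and the sign convention in hand, the term-by-term matching completes the proof.
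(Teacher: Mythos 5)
Your proposal is correct and matches the paper's own argument: the paper likewise fixes a large $n$, writes out both factors from the defining formula (working with the tensor-level invariants through which eq.~(\ref{eq:fdef}) factors), relabels the primed indices as $i_j^{ml_j+1},\ldots,i_j^{(m+m')l_j}$ so that $\sigma'_j$ becomes a bijection of the shifted range, and observes that the product of the two independent sums is exactly the sum attached to the concatenated permutations $\sigma''_j$, which by definition of $\star$ represent $[\sigma_1,\ldots,\sigma_k]\star[\sigma'_1,\ldots,\sigma'_k]$. Your additional remarks on coset-representative independence and sign normalization are compatible with how the paper handles these points in the text preceding the lemma.
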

The proof can be found in sec. \ref{sec:proofs}.
It follows that we have a surjective homomorphism from the semigroup
algebra $\mathbb{C}\gr(k,(l_1,\ldots,l_k))\to I_{k,(\lambda_1,\ldots,\lambda_k)}$
defined by $G\mapsto f_{\varphi^{-1}(G)}$ and extended linearly. When equality
holds in eq. (\ref{eq:dmbound}), this is an isomorphism.

We turn now to the case when we have a strict inequality in eq. (\ref{eq:dmbound}).
We will see that this means only a minor modification in the structure of the
inverse limit, namely, to each \emph{nonzero} term in eq. (\ref{eq:dmcosets})
corresponds an invariant via eq. (\ref{eq:fdef}) forming a basis of the
degree $m$ homogeneous subspace, and for each vanishing term the corresponding
invariant is identically zero.

\begin{lem}\label{lem:vanishing}
Let ${G_m}$ and ${H_m}$ as above and $s=(\sigma_1,\ldots,\sigma_k)\in {G_m}$ such that
\begin{equation}
\langle((\chi_{\lambda_1}\times\cdots\times\chi_{\lambda_k})\wr 1)^s,\res_{H_m}^{{H_m}_s}(\chi_{\lambda_1}\times\cdots\times\chi_{\lambda_k})\wr 1\rangle_{{H_m}_s}=0
\end{equation} holds. Then $f_{[\sigma_1,\ldots,\sigma_k]}=0$.
\end{lem}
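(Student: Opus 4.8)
The plan is to translate the vanishing of the polynomial $f_{[\sigma_1,\ldots,\sigma_k]}$ into the vanishing of a single element of the group algebra $\mathbb{C}G_m$ and then to read off its coefficients. Write $s=(\sigma_1,\ldots,\sigma_k)$, $V=(\mathbb{C}^{n_1})^{\otimes ml_1}\otimes\cdots\otimes(\mathbb{C}^{n_k})^{\otimes ml_k}$ and $W=S^m(\mathbb{S}_{\lambda_1}(\mathbb{C}^{n_1})\otimes\cdots\otimes\mathbb{S}_{\lambda_k}(\mathbb{C}^{n_k}))$, and abbreviate $\psi=(\chi_{\lambda_1}\times\cdots\times\chi_{\lambda_k})\wr 1$. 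The crucial observation is that, because each $\lambda_j$ is a single row or a single column, every $\chi_{\lambda_j}$ is one-dimensional (trivial or $\sgn$), so $\psi$ is a one-dimensional \emph{real} character of $H_m$. Consequently the element $\theta=\frac{1}{|H_m|}\sum_{h\in H_m}\psi(h)\,h\in\mathbb{C}G_m$ is a self-adjoint idempotent, and its action on $V$ (by permutation of tensor factors) is exactly the orthogonal projection onto the $\psi$-isotypic subspace, which is precisely $W$. Reading off eq. (\ref{eq:fdef}), $f_{[\sigma_1,\ldots,\sigma_k]}$ is the invariant attached to the compressed endomorphism $\theta\,s\,\theta$ of $W$; hence $f_{[\sigma_1,\ldots,\sigma_k]}=0$ if and only if $\theta s\theta=0$ as an operator on $V$.

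Next I would pass from operators to the group algebra. Fixing $n_i\ge ml_i$ and applying Schur--Weyl duality to each tensor factor, the map $\mathbb{C}G_m=\bigotimes_{j}\mathbb{C}S_{ml_j}\to\End_{LU_n}(V)$ is injective, so $\theta s\theta=0$ in $\End(V)$ is equivalent to $\theta s\theta=0$ in $\mathbb{C}G_m$. Working in this stable range is harmless: by Lemma \ref{lem:lowdimiso} the degree-$m$ part of the inverse limit is faithfully represented there, so it suffices to prove the vanishing for such $n$.

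The heart of the argument is a coefficient computation. Expanding,
\[
\theta s\theta=\frac{1}{|H_m|^2}\sum_{h,h'\in H_m}\psi(h)\psi(h')\,h\,s\,h',
\]
the coefficient of a group element $g$ is nonzero only when $g\in H_m s H_m$. Writing $g=h_0 s h_0'$, the solutions of $h s h'=g$ are parametrized by $u\in {H_m}_s=H_m\cap sH_ms^{-1}$ via $h=h_0u$, $h'=s^{-1}u^{-1}s\,h_0'$. Using that $\psi$ is multiplicative and real, the coefficient of $g$ collapses to
\[
\psi(h_0)\psi(h_0')\,\frac{1}{|H_m|^2}\sum_{u\in {H_m}_s}\psi(u)\,\psi(s^{-1}us)=\pm\frac{|{H_m}_s|}{|H_m|^2}\langle\psi^s,\res_{{H_m}_s}\psi\rangle_{{H_m}_s},
\]
that is, a nonzero multiple of the very inner product appearing in the hypothesis. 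When this Mackey term vanishes, every coefficient of $\theta s\theta$ vanishes, whence $\theta s\theta=0$ and $f_{[\sigma_1,\ldots,\sigma_k]}=0$, as claimed. (The same computation shows the converse, recovering the bound of eq. (\ref{eq:dmbound}) with equality on the nonvanishing terms.)

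The step I expect to be the main obstacle is the first one: verifying \emph{precisely} that the symmetrized and antisymmetrized vectors assembled in eq. (\ref{eq:fdef}) realize the endomorphism $\theta s\theta$, i.e. that the one-dimensional character $\psi$ really does cut out $W$ inside $V$ and that $\theta$ is the corresponding orthogonal projection. Once this dictionary between the polynomial, the compressed endomorphism, and the group-algebra element is nailed down, the remaining manipulations---Schur--Weyl injectivity and the double-coset bookkeeping---are routine.
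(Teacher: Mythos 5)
Your proof is correct, and at its heart it runs on the same engine as the paper's: both arguments reduce the claim to the vanishing of a character sum over the Mackey intersection subgroup ${H_m}_s$, namely $\sum_{u\in{H_m}_s}\psi^s(u)\psi(u)=|{H_m}_s|\langle\psi^s,\res_{H_m}^{{H_m}_s}\psi\rangle_{{H_m}_s}$ with $\psi=(\chi_{\lambda_1}\times\cdots\times\chi_{\lambda_k})\wr 1$. The difference is in the packaging, and yours buys something. The paper works directly with the explicit polynomial of eq.~(\ref{eq:fdef}): it groups the tuples of tensor indices by their multiset structure, observes that the permutations preserving a given structure form exactly ${H_m}_s$ (the intersection of the two stabilizers $H_m$ and $sH_ms^{-1}$), and identifies each inner sum, wedge-product signs included, as proportional to the Mackey inner product. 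You instead lift the whole question into the group algebra $\mathbb{C}G_m$: since each $\chi_{\lambda_j}$ is a one-dimensional real character (this is precisely where the boson/fermion restriction enters, in both proofs), $\theta=\frac{1}{|H_m|}\sum_h\psi(h)h$ is a self-adjoint idempotent projecting $V$ onto $W$, the invariant $f_{[\sigma_1,\ldots,\sigma_k]}$ corresponds to the compression $\theta s\theta$, and the parametrization $h=h_0u$, $h'=s^{-1}u^{-1}sh_0'$ with $u\in{H_m}_s$ shows that every coefficient of $\theta s\theta$ is $\pm\frac{|{H_m}_s|}{|H_m|^2}$ times the hypothesized inner product. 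This replaces the paper's index and sign bookkeeping by a clean coefficient extraction; it establishes the vanishing in $\mathbb{C}G_m$ once and for all, hence uniformly in $n$ (note that for this direction you do not even need Schur--Weyl injectivity, only that $\mathbb{C}G_m\to\End(V)$ is an algebra homomorphism); and, with injectivity in the stable range $n_j\ge ml_j$, the same computation yields the converse (a nonzero Mackey term gives a nonzero invariant), which the paper instead extracts from the dimension count of eq.~(\ref{eq:dmcosets}). Finally, the step you flag as the main obstacle is not actually a gap: the identification of $f_{[\sigma_1,\ldots,\sigma_k]}$ with the compressed endomorphism is exactly how the paper constructs these invariants (the $LU_n$-equivariant surjection $\End(V)^{LU_n}\to\End(W)^{LU_n}$ via symmetrization/antisymmetrization), and identifying that projection with $\theta$ is immediate since the isotypic projector of a one-dimensional real character is $\frac{1}{|H_m|}\sum_h\psi(h)h$, whose image in $V$ is $S^m(\mathbb{S}_{\lambda_1}\mathbb{C}^{n_1}\otimes\cdots\otimes\mathbb{S}_{\lambda_k}\mathbb{C}^{n_k})$.
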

For the proof see sec. \ref{sec:proofs}

Now we have everything at hand to state the following:
\begin{thm}
Let $B:=\{G\in\gr(k,(l_1,\ldots,l_k))|f_{\varphi^{-1}(G)}\neq 0\}$
and let $S$ denote the subset $\{G\in\grc(k,(l_1,\ldots,l_k))|f_{\varphi^{-1}(G)}\neq 0\}$ of $B$.
Then
\begin{equation}
\{f_{\varphi^{-1}(G)}|G\in B\}
\end{equation}
is a basis of $I_{k,(\lambda_1,\ldots,\lambda_k)}$
and $I_{k,(\lambda_1,\ldots,\lambda_k)}$ is freely generated as an algebra by the
set
\begin{equation}
\{f_{\varphi^{-1}(G)}|G\in S\}
\end{equation}
\end{thm}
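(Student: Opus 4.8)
The plan is to prove the two assertions in turn: first that the nonzero invariants indexed by $B$ form a basis, and then that the connected ones among them, indexed by $S$, are algebraically independent generators.

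For the basis statement I would start from the fact, established above via Schur--Weyl duality and symmetrization, that the family $\{f_{\varphi^{-1}(G)}\mid G\in\gr(k,(l_1,\ldots,l_k),m)\}$ spans the degree $m$ homogeneous subspace of $I_{k,(\lambda_1,\ldots,\lambda_k)}$; since these graphs are in bijection with the double cosets $H_m\backslash G_m/H_m$, there are exactly $|H_m\backslash G_m/H_m|$ of them. On the other hand each summand in the Mackey decomposition (\ref{eq:dmcosets}) is $0$ or $1$, so $d_m$ equals the number of double cosets whose term is $1$. Lemma \ref{lem:vanishing} tells us that every double coset with vanishing term contributes $f_{\varphi^{-1}(G)}=0$; hence at most $d_m$ of the spanning elements are nonzero. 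As they nevertheless span a space of dimension $d_m$, exactly $d_m$ of them are nonzero and these are then linearly independent. This proves that $\{f_{\varphi^{-1}(G)}\mid G\in B\}$ is a basis, and incidentally that $G\in B$ precisely when the corresponding term in (\ref{eq:dmcosets}) equals $1$.

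For the freeness statement the key point is that $B$ is the submonoid of $\gr(k,(l_1,\ldots,l_k))$ freely generated by $S$. One inclusion is immediate from Lemma \ref{lem:product}: if $G=G_1\sqcup\cdots\sqcup G_r$ is the decomposition into connected components, then $f_{\varphi^{-1}(G)}=f_{\varphi^{-1}(G_1)}\cdots f_{\varphi^{-1}(G_r)}$, so $G\in B$ forces every factor to be nonzero, i.e. every $G_i\in S$. The reverse inclusion---that a disjoint union of graphs in $S$ again lies in $B$---is the step I expect to be the only real obstacle, since a priori a product of nonzero invariants could vanish. I would settle it by descending to a finite level: each $f_{\varphi^{-1}(G_i)}$ is a nonzero element of the inverse limit, so by Lemma \ref{lem:lowdimiso} its image in $I_{k,(\lambda_1,\ldots,\lambda_k),n}$ is nonzero as soon as $\min_i l_in_i$ is at least the total degree $m_1+\cdots+m_r$, and at that level we are computing inside the polynomial ring $S(\mathcal{H}_{k,(\lambda_1,\ldots,\lambda_k),n}\oplus\mathcal{H}_{k,(\lambda_1,\ldots,\lambda_k),n}^*)$, which is an integral domain. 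The product of the nonzero images is therefore nonzero, whence $f_{\varphi^{-1}(G_1)}\cdots f_{\varphi^{-1}(G_r)}\neq 0$ and $G_1\sqcup\cdots\sqcup G_r\in B$. Thus what looked like the hard combinatorial condition---multiplicativity of nonvanishing, which one might fear requires factorizing the Mackey terms of (\ref{eq:dmcosets})---reduces to the domain property of the finite-dimensional invariant rings.

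Granting that $B$ is freely generated by $S$, the conclusion follows formally. By Lemma \ref{lem:product} the basis element attached to a multiset $\{G_1,\ldots,G_r\}$ of graphs in $S$ (repetitions allowed) is exactly the monomial $f_{\varphi^{-1}(G_1)}\cdots f_{\varphi^{-1}(G_r)}$ in the generators indexed by $S$. Distinct elements of $B$ yield distinct such monomials, because $\gr(k,(l_1,\ldots,l_k))$ is free on $\grc(k,(l_1,\ldots,l_k))$, so the factorization into connected components is unique. Hence the monomials in $\{f_{\varphi^{-1}(G)}\mid G\in S\}$ are precisely the basis $\{f_{\varphi^{-1}(G)}\mid G\in B\}$ and are in particular linearly independent; equivalently, these generators are algebraically independent and generate, so $I_{k,(\lambda_1,\ldots,\lambda_k)}$ is the free commutative algebra on $\{f_{\varphi^{-1}(G)}\mid G\in S\}$.
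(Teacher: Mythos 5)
Your proof is correct, and its first half coincides with the paper's own argument: the spanning family indexed by $\gr(k,(l_1,\ldots,l_k),m)$, the observation that each Mackey summand in eq. (\ref{eq:dmcosets}) is $0$ or $1$, Lemma \ref{lem:vanishing} to discard the vanishing double cosets, and the resulting count forcing exactly $d_m$ nonzero elements which are then automatically linearly independent. Where you genuinely go beyond the paper is in the freeness part. The paper's proof establishes only one half of your monoid-theoretic claim: every $G\in B$ factors uniquely into connected components, which must lie in $S$ because a vanishing factor would annihilate the product (Lemma \ref{lem:product}); it then concludes freeness without explicitly verifying the converse, namely that a disjoint union of graphs in $S$ again lies in $B$, i.e.\ that a product of nonzero invariants cannot vanish. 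As you correctly identify, this converse is precisely what algebraic independence needs --- if some product $f_{\varphi^{-1}(G_1)}\cdots f_{\varphi^{-1}(G_r)}$ with all $G_i\in S$ were zero, that monomial would itself be a nontrivial relation among the proposed generators. Your resolution --- project, via Lemma \ref{lem:lowdimiso}, to a finite level $n$ with $\min_i l_in_i$ at least the total degree, where all factors have nonzero image and where the invariant algebra sits inside the polynomial ring $S(\mathcal{H}_{k,(\lambda_1,\ldots,\lambda_k),n}\oplus\mathcal{H}_{k,(\lambda_1,\ldots,\lambda_k),n}^*)$, an integral domain --- settles the point cleanly and uses only tools already available in the paper. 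In short, your proof follows the paper's skeleton but supplies a step the paper leaves tacit, at the modest cost of one extra appeal to Lemma \ref{lem:lowdimiso}.
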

\begin{proof}
We have seen that $\{f_{\varphi^{-1}(G)}|G\in \gr(k,(l_1,\ldots,l_k))\}$ generates $I_{k,(\lambda_1,\ldots,\lambda_k)}$
as a vector space. Removing the zero vector does not change this property, hence $\{f_{\varphi^{-1}(G)}|G\in B\}$
is also a generating set. The number of degree $m$ elements in $\{f_{\varphi^{-1}(G)}|G\in B\}$ is at most
the number of double cosets $[s]\in H_m\backslash G_m/H_m$ such that
\begin{equation}
\langle((\chi_{\lambda_1}\times\cdots\times\chi_{\lambda_k})\wr 1)^s,\res_{H_m}^{{H_m}_s}(\chi_{\lambda_1}\times\cdots\times\chi_{\lambda_k})\wr 1\rangle_{{H_m}_s}\neq 0
\end{equation}
and this number is equal to $d_m$ by eq. (\ref{eq:dmcosets}), therefore must form a basis.

An element $G$ of $B$ can be uniquely written as the disjoint union of connected graphs
i.e. elements of $S$, because an invariant corresponding to a connected graph outside
$S$ is zero. By lemma \ref{lem:product} this means that $f_{\varphi^{-1}(G)}$ is in a unique
way the product of elements of $\{f_{\varphi^{-1}(G)}|G\in S\}$.
\end{proof}

It remains to settle the question whether we have equality or not in eq. (\ref{eq:dmbound})
for a given $k$-tuple $(\lambda_1,\ldots,\lambda_k)$. Equivalently, we wish to determine
if there exist an element $s$ in $G_m$ for some $m$ such that
\begin{equation}
\langle((\chi_{\lambda_1}\times\cdots\times\chi_{\lambda_k})\wr 1)^s,\res_{H_m}^{{H_m}_s}(\chi_{\lambda_1}\times\cdots\times\chi_{\lambda_k})\wr 1\rangle_{{H_m}_s}=0
\end{equation}
Clearly this cannot happen if $(\chi_{\lambda_1}\times\cdots\times\chi_{\lambda_k})\wr 1$
is the restriction of some character of $G_m$. We may assume that $m\ge 2$ since for
$m=0$ and $m=1$ we have $G_m=H_m={H_m}_s$.

In the $m\ge 2$ case $G_m=S_{ml_1}\times\cdots\times S_{ml_2}$ has exactly $2^k$
distinct one dimensional characters: we can chose the trivial or the alternating
character for each factor. Let $X=(X_1,\ldots,X_k)\in\{triv,alt\}^k$ be such a
choice. We wish to find out its value on an element $(\sigma_1^1,\ldots,\sigma_1^m,\sigma_2^1,\ldots,\sigma_2^m,\ldots,\sigma_k^1,\ldots,\sigma_k^m,\pi)$ of $H_m=(S_{l_1}\times\cdots\times S_{l_2})\wr S_m\le G_m$.
In the image of this element in $G_m$ the $j$th factor is a permutation of
$m$ blocks of size $l_j$ determined by $\pi\in S_m$ and inside the blocks the
permutations $\sigma_j^1,\ldots,\sigma_j^m$ act. We denote this permutation by
$\sigma_j^\pi$. With this notation we have
\begin{equation}
X_j(\sigma_j^\pi)=\left\{\begin{array}{ll}
1 & \textrm{if $X_j=triv$}  \\
\prod_{i=1}^m \sgn(\sigma_j^i) & \textrm{if $X_j=alt$, $|l_j|\equiv 0\pmod{2}$}  \\
\sgn(\pi)\prod_{i=1}^m \sgn(\sigma_j^i) & \textrm{if $X_j=alt$, $|l_j|\equiv 1\pmod{2}$}  \\
\end{array}\right.
\end{equation}
where $\sgn$ is the sign of the permutation and therefore the value of the
character of $G_m$ we are looking for is
\begin{equation}
\prod_{j=1}^k X_j(\sigma_j^\pi)=(\sgn(\pi))^{\{j\in\{1,\ldots,k\}|X_j=alt,l_j\equiv 1\pmod{2}\}}\prod_{j=1}^k\prod_{i=1}^m X_j(\sigma_j^i)
\end{equation}

Let $Y_j=triv$ if $\lambda_j=(l_j)$ and $Y_j=alt$ if $\lambda_j=(1^{l_j})$.
In the special case $l_j=1$ the two representations are the same, and we can
safely choose any of $\{triv,alt\}$ for $Y_j$. This freedom will be used shortly.
The value of $(\chi_{\lambda_1}\times\cdots\times\chi_{\lambda_k})\wr 1$ on
$(\sigma_1^1,\ldots,\sigma_1^m,\ldots,\sigma_k^1,\ldots,\sigma_k^m,\pi)$ is
\begin{equation}
\prod_{j=1}^k\prod_{i=1}^m Y_j(\sigma_j^i)
\end{equation}
The two values coincide for every element iff $\{j\in\{1,\ldots,k\}|X_j=alt,l_j\equiv 1\pmod{2}\}$
is even and for all $j\in\{1,\ldots,k\}$ either $l_j=1$ or $X_j=Y_j$. Note that
when for at least one $j$ we have $l_j=1$ then we can always choose $X$ so that
$\{|j\in\{1,\ldots,k\}|X_j=alt,l_j\equiv 1\pmod{2}\}$ is even. If $l_j>1$ for all
$j\in\{1,\ldots,k\}$ then the condition means that the total number of fermions
is even.

On the other hand, when $\forall j:l_j>1$ and the total number of fermions is
odd and $m\ge 3$ we can always find an element $s=(\sigma_1,\ldots,\sigma_k)\in G_m$ such that
\begin{equation}
\langle((\chi_{\lambda_1}\times\cdots\times\chi_{\lambda_k})\wr 1)^s,\res_{H_m}^{{H_m}_s}(\chi_{\lambda_1}\times\cdots\times\chi_{\lambda_k})\wr 1\rangle_{{H_m}_s}=0
\end{equation}

As an important special case we list some values of $d_m$ for $I_{1,((l))}$
and $I_{1,((1^l))}$, that is, for a system of $l$ bosons and fermions, respectively in table \ref{tab:bosonfermion}.
The numbers of homogeneous invariants in an algebraically independent generating
set are listed in \ref{tab:bosonfermionfree}.

Note that in the case of two particles (either bosons or fermions) $d_m$ is
the number of partitions of $m$. Accordingly, $I_{1,((2))}$ and $I_{1,((1^2))}$
is freely generated by traces of nonnegative integer powers of the one-particle
reduced density matrix.

Note that bipartite entanglement measures introduced previously for bosons\cite{Paskauskas}
and fermions\cite{Schliemann} can be expressed with invariants of the reduced
density matrix.

\begin{table}
\centering
\begin{tabular}{r|rrrrr}
  & 1 & 2 & 3 & 4 & 5 \\
\hline
1 & 1 & 1 & 1 & 1 & 1 \\
2 & 1 & 2 & 2 & 3 & 3 \\
3 & 1 & 3 & 5 & 9 & 13 \\
4 & 1 & 5 & 12 & 43 & 106 \\
5 & 1 & 7 & 31 & 264 & 1856 \\
6 & 1 & 11 & 103 & 2804 & 65481 \\
7 & 1 & 15 & 383 & 44524 & 3925518
\end{tabular}
\begin{tabular}{r|rrrrr}
  & 1 & 2 & 3 & 4 & 5 \\
\hline
1 & 1 & 1 & 1 & 1 & 1 \\
2 & 1 & 2 & 2 & 3 & 3 \\
3 & 1 & 3 & 4 & 9 & 12 \\
4 & 1 & 5 & 10 & 43 & 94 \\
5 & 1 & 7 & 23 & 264 & 1613 \\
6 & 1 & 11 & 71 & 2804 & 58793 \\
7 & 1 & 15 & 251 & 44524 & 3624974
\end{tabular}
\caption{Stable dimensions of homogeneous subspaces of the algebra of local unitary invariants of bosons (left) and fermions (right). The number of particles grows from left to right, while degree grows downwards.\label{tab:bosonfermion}}
\end{table}

\begin{table}
\centering
\begin{tabular}{r|rrrrr}
  & 1 & 2 & 3 & 4 & 5 \\
\hline
1 & 1 & 1 & 1 & 1 & 1 \\
2 & 0 & 1 & 1 & 2 & 2 \\
3 & 0 & 1 & 3 & 6 & 10 \\
4 & 0 & 1 & 6 & 31 & 90 \\
5 & 0 & 1 & 16 & 209 & 1730 \\
6 & 0 & 1 & 59 & 2453 & 63386 \\
7 & 0 & 1 & 243 & 41098 & 3855647
\end{tabular}
\begin{tabular}{r|rrrrr}
  & 1 & 2 & 3 & 4 & 5 \\
\hline
1 & 1 & 1 & 1 & 1 & 1 \\
2 & 0 & 1 & 1 & 2 & 2 \\
3 & 0 & 1 & 2 & 6 & 9 \\
4 & 0 & 1 & 5 & 31 & 79 \\
5 & 0 & 1 & 11 & 209 & 1501 \\
6 & 0 & 1 & 39 & 2453 & 56973 \\
7 & 0 & 1 & 157 & 41098 & 3562441
\end{tabular}
\caption{Numbers of free generators of the algebra of local unitary invariants of bosons (left) and fermions (right). The number of particles grows from left to right, while degree grows downwards.\label{tab:bosonfermionfree}}
\end{table}

\section{Invariants of mixed states}\label{sec:mixed}

The case of mixed state invariants can be reduced to the results of the previous
section using the same method as in ref. \cite{Albeverio,Vrana}. For $k\in\mathbb{N}$, partitions
$\lambda_1,\ldots,\lambda_k$ and dimensions $n=(n_1,\ldots,n_k)$ we have the
isomorphism
\begin{equation}
\begin{split}
I^{\mathrm{mixed}}_{k,(\lambda_1,\ldots,\lambda_k),n}
 & :=S(\End(\mathcal{H}_{k,(\lambda_1,\ldots,\lambda_k),n}))^{LU_n} \\
 & \simeq S(\mathcal{H}_{k+1,(\lambda_1,\ldots,\lambda_k,(1)),(n_1,\ldots,n_k,n_E)}\oplus\mathcal{H}_{k+1,(\lambda_1,\ldots,\lambda_k,(1)),(n_1,\ldots,n_k,n_E)}^*)^{LU_{(n_1,\ldots,n_k,n_E)}}
\end{split}
\end{equation}
for large enough $n_E$. We can think of the last subsystem (``environment'') as
the purifying system of mixed states over $\mathcal{H}_{k,(\lambda_1,\ldots,\lambda_k),n}$.
We will continue to consider only bosons and fermions, i.e. we assume that $\lambda_j$
has a single row or column for all $j$. The extra subsystem is described by the
representation of $U(n_E,\mathbb{C})$ corresponding to the partition $(1)$.
In particular, in this case we always have equality in eq. (\ref{eq:dmbound}).
The first few stable dimensions for the $k=1$ case are indicated in table (\ref{tab:mixed})
while the numbers of homogeneous invariants in an algebraically independent generating
set are listed in \ref{tab:mixedfree}.

\begin{table}
\centering
\begin{tabular}{r|rrrrr}
  & 1 & 2 & 3 & 4  \\
\hline
1 & 1 & 1 & 1 & 1  \\
2 & 2 & 3 & 4 & 5  \\
3 & 3 & 8 & 16 & 31  \\
4 & 5 & 25 & 118 & 501  \\
5 & 7 & 85 & 1411 & 19158  \\
6 & 11 & 397 & 30335 & 1468699  \\
7 & 15 & 2183 & 939789 & 186406186 
\end{tabular}
\caption{Stable dimensions of homogeneous subspaces of the algebra of local unitary invariants of mixed states of bosons or fermions. The number of particles grows from left to right, while degree grows downwards.\label{tab:mixed}}
\end{table}

\begin{table}
\centering
\begin{tabular}{r|rrrrr}
  & 1 & 2 & 3 & 4  \\
\hline
1 & 1 & 1 & 1 & 1  \\
2 & 1 & 2 & 3 & 4  \\
3 & 1 & 5 & 12 & 26  \\
4 & 1 & 14 & 96 & 460  \\
5 & 1 & 50 & 1257 & 18553  \\
6 & 1 & 265 & 28568 & 1447330  \\
7 & 1 & 1601 & 904439 & 184851055 
\end{tabular}
\caption{Numbers of free generators of the algebra of local unitary invariants of mixed states of bosons or fermions. The number of particles grows from left to right, while degree grows downwards.\label{tab:mixedfree}}
\end{table}

The fact that we have a distinguished subsystem of a single particle makes it
possible to give an alternative description of the graphs labelling invariants
in the basis or in the algebraically independent generating set given above.

Let $l_j=|\lambda_j|$ as before. Applying the results of the previous section
we have that $\gr(k+1,(l_1,\ldots,l_k,1))$ encodes elements of a basis of
$I^{\mathrm{mixed}}_{k,(\lambda_1,\ldots,\lambda_k),n}$ while the subset
$\gr(k+1,(l_1,\ldots,l_k,1))$ corresponds to a set of free generators. Recall
that by definition the edges having the $k+1$th colour give a subgraph which
is bipartite and $1$-regular, therefore we have a distinguished bijection
between the two colour classes of vertices. We can use this bijection to
identify the endpoints of such edges, and at the same time, in order to be
able to recover the original graph we direct first the remaining edges from the
first colour class to the second one. In this way $\gr(k+1,(l_1,\ldots,l_k,1))$
can be identified with the set of equivalence classes of finite directed graphs
edges of $k$ different colours such that the subgraph determined by the $j$th
colour is $l_j$-regular (i.e. at each vertex the indegree and the outdegree are
both $l_j$).

This alternative description is illustrated in fig. (\ref{fig:mixed}) for a degree $3$
invariant of a mixed state of two bosons or two fermions.
\begin{figure}[htb]
\centering
\includegraphics{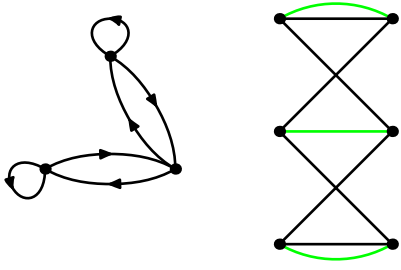}
\caption{Example of a graph corresponding to a mixed state invariant (left) obtained by identifying vertices of a graph of a pure state invariant (right) along the green edges coming from the purifying system (Colour online)\label{fig:mixed}}
\end{figure}

\section{Conclusion}\label{sec:conclusion}

In this paper we have studied the algebras of real polynomial invariants under
the local unitary groups $U(n_1,\mathbb{C}^{n_1})\times\cdots\times U(n_k,\mathbb{C}^{n_k})$
over the representation $(\mathbb{S}_{\lambda_1}\mathbb{C}^{n_1})\otimes\cdots\otimes(\mathbb{S}_{\lambda_k}\mathbb{C}^{n_k})$.
This can be interpreted as the state space of a composite quantum mechanical system
containing various types of identical particles, possibly obeying non-abelian statistics.

The algebras have the property that as $n_j\to\infty$ for all $j$, the dimension
of the homogeneous subspaces stabilize, making it convenient to work with the inverse
limit $I_{k,(\lambda_1,\ldots,\lambda_k)}$ of the algebras (in the category of graded algebras). The stable dimension
is then the dimension of the corresponding homogeneous subspace of the inverse limit,
for which a formula in terms of induced characters was derived.

In the most important case when only bosonic and fermionic particles are present,
the bound (\ref{eq:dmbound}) on the stable dimension was estabilished, which
has a combinatorial interpretation in terms of the number of graphs with a certain
property. Curiously the bound is saturated iff the total number of fermions is even.

We would like to remark that in general eq. (\ref{eq:dmbound}) does not hold.
For example in the simplest case not covered in sec. \ref{sec:various} we have $k=1$
and $\lambda_1=(2,1)$. In this case the sequence $d_m$ for $I_{1,((2,1))}$ starts
as $1,4,18,151,1628,24164,431401,\ldots$ which is to be compared with the $l=3$
column of table \ref{tab:bosonfermion}. The simplest such example in the mixed
case is $I^{\mathrm{mixed}}_{1,((2,1))}\simeq I_{2,((2,1),(1))}$, here the stable
dimensions are $1,8,97,3267,190139,17122837,2159496487,\ldots$ while the bounds
would be the values in column $l=3$ in table \ref{tab:mixed}.

By the definition of the inverse limit, $I_{k,(\lambda_1,\ldots,\lambda_k)}$
comes equipped with surjections onto the algebra of invariant polynomials over
$(\mathbb{S}_{\lambda_1}\mathbb{C}^{n_1})\otimes\cdots\otimes(\mathbb{S}_{\lambda_k}\mathbb{C}^{n_k})$
for every choice of the dimensions $n_1,\ldots,n_k$. Elements of the
inverse limit can therefore be directly interpreted as polynomial
invariants, although some of these will coincide when projected onto the
smaller algebras.

In the case of systems of bosons and fermions we have also given a combinatorial
description of an algebraically independent generating set of the inverse limit
in terms of regular graphs, showing in particular that the inverse limit is free
in these cases. This generating set can be interpreted as a set of polynomials
distinguishing between physically different types of nonlocal behaviour which is
minimal in the sense that any polynomial in its elements is nonzero provided the
single-particle Hilbert spaces are large enough.

We belive that these invariants will become a useful tool in the exploration of
genuine multiparticle quantum correlations in quantum mechanical systems containing
identical particles.

\section{The proofs}\label{sec:proofs}
\begin{proof}[Proof of lemma \ref{lem:lowdimiso}]
For every $n\in\mathbb{N}^k$ there is a surjection
$\mathbb{C}(S_{ml_1}\times\cdots\times S_{ml_k})\to(\End((\mathbb{C}^{n_1})^{\otimes ml_1}\otimes\cdots\otimes(\mathbb{C}^{n_k})^{\otimes ml_k}))^{LU_n}\simeq (S^{2m}((\mathbb{C}^{n_1})^{\otimes ml_1}\otimes\cdots\otimes(\mathbb{C}^{n_k})^{\otimes ml_k}\oplus ((\mathbb{C}^{n_1})^{\otimes ml_1}\otimes\cdots\otimes(\mathbb{C}^{n_k})^{\otimes ml_k}))^*)^{LU_n}$
defined by eq. \ref{eq:tensorinv}. Composing with the surjective map $(S^{2m}((\mathbb{C}^{n_1})^{\otimes ml_1}\otimes\cdots\otimes(\mathbb{C}^{n_k})^{\otimes ml_k}\oplus ((\mathbb{C}^{n_1})^{\otimes ml_1}\otimes\cdots\otimes(\mathbb{C}^{n_k})^{\otimes ml_k}))^*)^{LU_n}\twoheadrightarrow S(\mathcal{H}_{k,(\lambda_1,\ldots,\lambda_k),n}\oplus\mathcal{H}_{k,(\lambda_1,\ldots,\lambda_k),n}^*)^{LU_n}=I_{k,(\lambda_1,\ldots,\lambda_k),n}$
we obtain a map $\mathbb{C}(S_{ml_1}\times\cdots\times S_{ml_k})\to I_{k,(\lambda_1,\ldots,\lambda_k),n}$ that is onto the
degree $m$ homogeneous subspace, $I_{k,(\lambda_1,\ldots,\lambda_k),n}^{(m)}$. For $n\le n'$ we have the following commutative diagram:
\begin{equation}
\xymatrix{\mathbb{C}(S_{ml_1}\times\cdots\times S_{ml_k}) \ar@{->>}[r]\ar@{->>}[d]  &  I_{k,(\lambda_1,\ldots,\lambda_k),n}^{(m)}  \\
             I_{k,(\lambda_1,\ldots,\lambda_k),n'}^{(m)} \ar[ur]_{\varrho_{n,n'}}}
\end{equation}
This implies that the restriction of $\varrho_{n,n'}$ to the degree $m$ homogeneous subspace must also be surjective.
But if $(ml_1,\ldots,ml_k)\le n$, then $\dim I_{k,(\lambda_1,\ldots,\lambda_k),n}^{(m)}=\dim I_{k,(\lambda_1,\ldots,\lambda_k),n'}^{(m)}$ and therefore $\varrho_{n,n'}$ restricted to the degree $m$ homogeneous subspace is an isomorphism.
\end{proof}

Before proving lemma \ref{lem:wrindmany} we introduce some notation and summarize
some properties of wreath products following ref. \cite{Macdonald}. For a set $X$ let us denote the set of
partition valued functions on $X$ by $P(X)$, and let
\begin{equation}
P_m(X)=\{\varrho\in P(X)|\sum_x\in X|\varrho(x)|=m\}
\end{equation}
For a finite group $G$ the set of conjugacy classes of $G$ will be denoted by $G_*$.
Recall that the set of conjugacy classes of $G\wr S_m$ for any finite group $G$
is in bijection with $P_m(G_*)$ as follows: if $x=((g_1,g_2,\ldots,g_m),\pi)$ is an
element of $G\wr S_m$ then $\pi$ can be written as a product of disjoint cycles,
and for any cycle $(i_1,i_2,\ldots,i_k)$ we can form the product $g_1g_2\cdots g_k$
which is determined up to conjugation. The class of $x$ is then labelled by the
function $\varrho\in P_m(G_*)$ for which the number of $k$-s in $\varrho(c)$ is
the number of $k$-cycles of $\pi$ such that the corresponding cycle-product is in $c$.

The order of the centralizer of an element $x$ with type $\varrho$ is
\begin{equation}
Z_\varrho=\prod_{c\in G_*}z_{\varrho(c)}\zeta_c^{l(\varrho(c))}
\end{equation}
where $l$ denotes the length of the partition as usual and $\zeta_c$ is the order
of the centralizer in $G_*$ of an element in the conjugacy class $c$.

If $\gamma$,$\chi$ are characters of $G$ and $S_m$ respectively, then the value of
the character $\gamma\wr\chi$ on an element with type $\varrho$ is
\begin{equation}
\left(\prod_{c\in G_*}\gamma(c)^{l(\varrho(c))}\right)\chi(\lambda)
\end{equation}
where $\lambda=\cup_{c\in G_*}\varrho(c)$.

Now we prove the following observation:
\begin{lem}
Let $A,B$ be two finite groups, $m\in\mathbb{N}$, $\mu\vdash m$ and let $\alpha$, $\beta$ be class functions on
$A$ and $B$, respectively. Then
\begin{equation}
\ind_{(A\times B)\wr S_m}^{(A\wr S_m)\times(B\wr S_m)}(\alpha\times\beta)\wr\chi_{\mu}=\sum_{\mu_1,\mu_2\vdash m}\langle\chi_{\mu},\chi_{\mu_1}\chi_{\mu_2}\rangle_{S_m}(\alpha\wr\chi_{\mu_1})\times(\beta\wr\chi_{\mu_2})
\end{equation}
\end{lem}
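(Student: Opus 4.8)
The plan is to prove the equality of the two class functions on $K:=(A\wr S_m)\times(B\wr S_m)$ by evaluating both sides on an arbitrary conjugacy class, using only the two facts recalled just above: the value of a wreath-product character $\gamma\wr\chi$ on an element of type $\varrho$, and the centralizer order $Z_\varrho$. Since $(A\times B)_*=A_*\times B_*$, a conjugacy class of $K$ is a pair $(\varrho_A,\varrho_B)\in P_m(A_*)\times P_m(B_*)$; write $\lambda_A=\bigcup_a\varrho_A(a)$ and $\lambda_B=\bigcup_b\varrho_B(b)$ for the underlying cycle types of the two $S_m$-components. The image of $H:=(A\times B)\wr S_m$ in $K$ consists of the pairs whose two $S_m$-parts coincide, and since conjugation in $K$ acts independently on those two $S_m$-parts, an element of $K$ is conjugate into $H$ iff $\lambda_A=\lambda_B$. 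Hence both sides are supported precisely on the classes with $\lambda_A=\lambda_B=:\lambda$, and it suffices to match them there.

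First I would evaluate the right-hand side. Applying the character formula to $\alpha\wr\chi_{\mu_1}$ and $\beta\wr\chi_{\mu_2}$ pulls out the group-theoretic data, leaving the value $\big(\prod_a\alpha(a)^{l(\varrho_A(a))}\big)\big(\prod_b\beta(b)^{l(\varrho_B(b))}\big)\sum_{\mu_1,\mu_2\vdash m}\langle\chi_\mu,\chi_{\mu_1}\chi_{\mu_2}\rangle\chi_{\mu_1}(\lambda_A)\chi_{\mu_2}(\lambda_B)$. The inner sum is a pure symmetric-group quantity: writing $\langle\chi_\mu,\chi_{\mu_1}\chi_{\mu_2}\rangle=\sum_{\nu\vdash m}z_\nu^{-1}\chi_\mu(\nu)\chi_{\mu_1}(\nu)\chi_{\mu_2}(\nu)$ and applying column orthogonality $\sum_{\kappa\vdash m}\chi_\kappa(\nu)\chi_\kappa(\lambda)=\delta_{\nu\lambda}z_\lambda$ to the $\mu_1$- and $\mu_2$-sums collapses it to $\delta_{\lambda_A\lambda_B}\,z_\lambda\,\chi_\mu(\lambda)$. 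This confirms the right-hand side vanishes unless $\lambda_A=\lambda_B$.

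Next I would evaluate the left-hand side with the induced-class-function formula $\ind_H^K\psi(k)=|C_K(k)|\sum_{[h]}|C_H(h)|^{-1}\psi(h)$, the sum running over the $H$-classes $[h]$ that fuse to the class of $k$. These fusing classes are exactly the $\varrho\in P_m(A_*\times B_*)$ with the prescribed marginals $\bigcup_b\varrho(a,b)=\varrho_A(a)$ and $\bigcup_a\varrho(a,b)=\varrho_B(b)$. Inserting $\psi=(\alpha\times\beta)\wr\chi_\mu$ together with $|C_K(k)|=Z_{\varrho_A}Z_{\varrho_B}$ and $|C_H(h)|=Z_\varrho$, the decisive point is that, because the length $l$ is additive under the union of partitions, the $\zeta$-factors of $Z_{\varrho_A}Z_{\varrho_B}$ and $Z_\varrho$ cancel, and the same additivity pulls out exactly the prefactor $\prod_a\alpha(a)^{l(\varrho_A(a))}\prod_b\beta(b)^{l(\varrho_B(b))}$ that also appeared on the right. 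What survives is $\chi_\mu(\lambda)$ times the purely combinatorial factor $\prod_a z_{\varrho_A(a)}\prod_b z_{\varrho_B(b)}\sum_\varrho\prod_{a,b}z_{\varrho(a,b)}^{-1}$.

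The crux is then the counting identity $\prod_a z_{\varrho_A(a)}\prod_b z_{\varrho_B(b)}\sum_\varrho\prod_{a,b}z_{\varrho(a,b)}^{-1}=z_\lambda$, the sum over partition-valued functions on $A_*\times B_*$ with row-marginals $\varrho_A$ and column-marginals $\varrho_B$. I would prove it size-by-size: separating the parts of each fixed size $k$, the $k$-power contributions reduce to $\prod_k k^{m_k(\lambda)}$, and the factorial contributions reduce, for each $k$, to the standard contingency-table identity $\sum_{(n_{ij})}\prod_i r_i!\prod_j c_j!\big/\prod_{i,j}n_{ij}!=N!$ (with fixed marginals $r_i,c_j$ and $N=\sum_i r_i=\sum_j c_j$), which together give $z_\lambda$ independently of the particular marginals. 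With this established, both sides equal the same prefactor times $z_\lambda\chi_\mu(\lambda)$ on every class, completing the proof. I expect this marginal bookkeeping and the centralizer cancellation to be the main obstacle; conceptually the whole statement is the image, under the characteristic map for wreath products, of the classical Kronecker identity $s_\mu[XY]=\sum_{\mu_1,\mu_2}\langle\chi_\mu,\chi_{\mu_1}\chi_{\mu_2}\rangle\,s_{\mu_1}[X]\,s_{\mu_2}[Y]$, which offers an alternative route once one knows $\ch(f\wr\chi_\mu)=s_\mu$ evaluated at $p_k\mapsto\sum_c\zeta_c^{-1}f(c)\,p_k(c)$.
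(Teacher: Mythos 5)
Your proposal is correct, and its overall skeleton coincides with the paper's proof: both evaluate the two class functions on an arbitrary conjugacy class $(\varrho_A,\varrho_B)\in P_m(A_*)\times P_m(B_*)$ of $H':=(A\wr S_m)\times(B\wr S_m)$, both use the wreath-product character formula, both observe that each side is supported on classes with $\lambda_A=\lambda_B=\lambda$, and both reduce the remaining equality to the same numerical identity $\frac{1}{z_\lambda}\sum_\varrho\frac{1}{Z_\varrho}=\frac{1}{Z_{\varrho_A}}\frac{1}{Z_{\varrho_B}}$, the sum running over $\varrho\in P_m(A_*\times B_*)$ with marginals $\varrho_A$ and $\varrho_B$ (your column-orthogonality collapse of the $\mu_1,\mu_2$-sum is the computation the paper performs via inner products with characteristic functions, and your induced-character formula is the paper's Frobenius reciprocity step in disguise). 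Where you genuinely diverge is the proof of this crux identity. The paper proves it group-theoretically: multiplying by $|A|^m m!\,|B|^m m!$ turns both sides into cardinalities, and the unique factorization $H'=H\cdot(\{e\}\times S_m)$ with $H:=(A\times B)\wr S_m$ and trivial intersection yields a surjection from type-fixed elements of $H$ times $S_m$ onto type-fixed elements of $H'$, all of whose fibers have exactly $z_\lambda$ elements. You instead cancel the $\zeta_c$-factors using additivity of the length $l$ under union of partitions (correct), and verify the residual identity $\prod_a z_{\varrho_A(a)}\prod_b z_{\varrho_B(b)}\sum_\varrho\prod_{a,b}z_{\varrho(a,b)}^{-1}=z_\lambda$ part-size by part-size via the contingency-table identity $\sum_{(n_{ij})}\prod_i r_i!\prod_j c_j!\big/\prod_{i,j}n_{ij}!=N!$; both that identity (a double count of pairs of ordered set partitions) and your reduction to it are correct, the powers of $k$ contributing $k^{m_k(\lambda)}$ and the factorials contributing $m_k(\lambda)!$ for each part size $k$. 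Your route is more elementary and self-contained combinatorially; the paper's is shorter and explains the identity conceptually as a statement about how $H$-classes fuse inside $H'$. Your closing observation that the lemma is the wreath-product shadow of $s_\mu[XY]=\sum_{\mu_1,\mu_2}\langle\chi_\mu,\chi_{\mu_1}\chi_{\mu_2}\rangle s_{\mu_1}[X]\,s_{\mu_2}[Y]$ is a nice conceptual alternative, but it is not the paper's method either and would need the wreath-product characteristic map to be set up to count as a full proof.
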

\begin{proof}
We will prove that the inner product of the left side with the characteristic function
of any conjugacy class is the same as that of the right side.

A pair $(\varrho_A,\varrho_B)\in P_m(A_*)\times P_m(B_*)$ can therefore be identified
with a conjugacy class in $H':=(A\wr S_m)\times(B\wr S_m)$. The partitions
$\lambda_A=\cup_{a\in A_*}\varrho_A(a)$ and $\lambda_B=\cup_{b\in B_*}\varrho_B(b)$
depend  only on the conjugacy class of the two permutations in an element of type
$(\varrho_A,\varrho_B)$. Denoting the corresponding characteristic function by
$1_{(\varrho_A,\varrho_B)}$ and similarly for conjugacy classes of $A\wr S_m$,$B\wr S_m$
and $S_m$ we can write
\begin{equation}\label{eq:rhs}
\begin{split}
 & \langle 1_{(\varrho_A,\varrho_B)}, \sum_{\mu_1,\mu_2\vdash m}\langle\chi_\mu,\chi_{\mu_1}\chi_{\mu_2}\rangle_{S_m}(\alpha\wr\chi_{\mu_1})\times(\beta\wr\chi_{\mu_2})_{H'}  \\
 & = \sum_{\mu_1,\mu_2\vdash m}\langle\chi_\mu,\chi_{\mu_1}\chi_{\mu_2}\rangle_{S_m}\langle 1_{\varrho_A},\alpha\wr\chi_{\mu_1}\rangle_{A\wr S_m}\langle 1_{\varrho_B},\beta\wr\chi_{\mu_2}\rangle_{B\wr S_m}  \\
 & = \frac{z_{\lambda_A}\prod_{a\in A_*}\alpha(a)^{l(\varrho_A(a))}}{Z_{\varrho_A}}
     \frac{z_{\lambda_B}\prod_{b\in B_*}\beta (b)^{l(\varrho_B(b))}}{Z_{\varrho_B}}\sum_{\mu_1,\mu_2\vdash m}\langle\chi_\mu,\chi_{\mu_1}\chi_{\mu_2}\rangle\langle\chi_{\mu_1},\varphi_{\lambda_A}\rangle\langle\chi_{\mu_2},\varphi_{\lambda_B}\rangle  \\
 & = \frac{z_{\lambda_A}\prod_{a\in A_*}\alpha(a)^{l(\varrho_A(a))}}{Z_{\varrho_A}}
     \frac{z_{\lambda_B}\prod_{b\in B_*}\beta (b)^{l(\varrho_B(b))}}{Z_{\varrho_B}}\sum_{\mu_2\vdash m}\langle\chi_\mu\chi_{\mu_2},\varphi_{\lambda_A}\rangle\langle\chi_{\mu_2},\varphi_{\lambda_B}\rangle  \\
 & = \frac{z_{\lambda_A}\prod_{a\in A_*}\alpha(a)^{l(\varrho_A(a))}}{Z_{\varrho_A}}
     \frac{z_{\lambda_B}\prod_{b\in B_*}\beta (b)^{l(\varrho_B(b))}}{Z_{\varrho_B}}\langle\varphi_{\lambda_A},\varphi_{\lambda_B}\rangle  \\
\end{split}
\end{equation}
where we have used that irreducible characters of $S_m$ as well as characteristic functions are real.
The last inner product is $0$ if $\lambda_A\neq\lambda_B$ and $z_{\lambda_A}^{-1}\chi_\mu(\lambda_A)$ otherwise.

Now let us look at the left hand side. The value of the induced character at an
element $x\in H'$ is $0$ whenever $x$ is not conjugate to any element of
$H:=(A\times B)\wr S_m$ which happens precisely when the two permutations of $x$ are
not conjugate to each other, that is when $\lambda_A\neq\lambda_B$.

If $\lambda_A=\lambda_B=\lambda$ then we have
\begin{equation}\label{eq:lhs}
\begin{split}
\langle 1_{(\varrho_A,\varrho_B)},\ind_H^{H'}(\alpha\times\beta)\wr\chi_\mu\rangle_{H'}
 & = \langle\res_{H'}^H 1_{(\varrho_A,\varrho_B)},(\alpha\times\beta)\wr\chi_\mu\rangle_H  \\
 & = \sum_\varrho\langle 1_\varrho,(\alpha\times\beta)\wr\chi_\mu\rangle_H  \\
 & = \sum_\varrho\frac{1}{Z_\varrho}\left(\prod_{(a,b)\in A_*\times B_*}(\alpha(a)\beta(b))^{l(\varrho(a,b))}\right)\chi_\mu(\lambda)  \\
 & = \chi_\mu(\lambda)\prod_{a\in A_*}\alpha(a)^{l(\varrho_A(a))}\prod_{b\in B_*}\alpha(b)^{l(\varrho_B(b))}\sum_\varrho\frac{1}{Z_\varrho}
\end{split}
\end{equation}
where the sums are over those $\varrho\in P_m(A_*\times B_*)$ for which
$\cup_{b\in B_*}\varrho(a,b)=\varrho_A(a)$ and $\cup_{a\in A_*}\varrho(a,b)=\varrho_B(b)$ holds.

Comparing eq. (\ref{eq:lhs}) with eq. (\ref{eq:rhs}) one can see that we need to prove
that
\begin{equation}
\frac{1}{z_\lambda}\sum_\varrho\frac{1}{Z_\varrho}=\frac{1}{Z_{\varrho_A}}\frac{1}{Z_{\varrho_B}}
\end{equation}
Multiplying both sides by $|A|^m m!|B|^m m!$ we have on the left hand side the size of the
conjugacy class $\lambda$ in $S_m$ times the number of $x\in H$ whose image in $H'$ has
type $(\varrho_A,\varrho_B)$ and on the right hand side size of the conjugacy class $(\varrho_A,\varrho_B)$
in $H'=(A\wr S_m)\times(B\wr S_m)$.

An element $(a,\pi_1,b,\pi_2)$ of $H'$ can be uniquely written as the product
of an element in $H$ and one in $S_m\simeq\{e_{A^m}\}\times\{e\}\times\{e_B^m\}\times S_m\le H$ as follows:
$(a,\pi_1,b,\pi_1)(e_{A^m},e,e_{B^m},\pi_1^{-1}\pi_2)$, uniqueness follows from the
fact that the intersection of the two subgroups consists of only the identity.
Therefore every element of $H$ can be reached as the conjugate of some element in $H'$
with an element of the form $\hat{\sigma}=(e_{A^m},e,e_{B^m},\sigma)$ where $\sigma\in S_m$

It follows that we have a surjection $H\times S_m\to H'$ whose appropriate
restrictions
\begin{equation}
\{x\in H|\textrm{the type of $x$ in $H'$ is $(\varrho_A,\varrho_B)$}\}\times S_m\to\{y\in H'|\textrm{$y$ has type $(\varrho_A,\varrho_B)$}\}
\end{equation}
are also surjections.

Finally, $\hat{\sigma}(a,\pi_1,b,\pi_2)\hat{\sigma}^{-1}=(a,\pi_1,\sigma(b),\sigma\pi_2\sigma^{-1})\in H$
iff $\pi_1=\sigma\pi_2\sigma^{-1}$ iff $\sigma\in\tilde{\pi}Z_{S_m}(\pi_2)$ for a fixed $\pi$ such
that $\pi\pi_2\pi^{-1}=\pi_1$, which implies that the inverse image of any element in $H'$ has
precisely $|Z_{S_m}(\pi_2)|=z_\lambda$ elements, finishing the proof.
\end{proof}

Now we extend the above result to $k$ factors instead of just two:
\begin{proof}[Proof of lemma \ref{lem:wrindmany}]
We prove by induction using the $k=2$ case in the induction step. The $k=1$ case is easy to check.
Assuming the statement to be true for $1,2,\ldots,k-1$ we can write
\begin{equation}
\begin{split}
 & \ind_{(A_1\times\cdots\times A_k)\wr S_m}^{(A_1\wr S_m)\times\cdots\times(A_k\wr S_m)}(\alpha_1\times\cdots\times\alpha_k)\wr\chi_{\mu} \\
 & = \ind_{(A_1\times\cdots\times A_{k-1})\wr S_m\times A_k\wr S_m}^{(A_1\wr S_m)\times\cdots\times(A_k\wr S_m)}\ind_{(A_1\times\cdots\times A_k)\wr S_m}^{(A_1\times\cdots\times A_{k-1})\wr S_m\times A_k\wr S_m}(\alpha_1\times\cdots\times\alpha_k)\wr\chi_{\mu}  \\
 & = \sum_{\mu'_{k-1},\mu_k\vdash m}\langle\chi_\mu,\chi_{\mu'_{k-1}}\chi_{\mu_k}\rangle\left(\ind_{(A_1\times\cdots\times A_{k-1})\wr S_m}^{(A_1\wr S_m)\times\cdots\times(A_{k-1}\wr S_m)}(\alpha_1\times\cdots\times\alpha_{k-1})\wr\chi_{\mu'_{k-1}}\right)\times(\alpha_k\wr\chi_{\mu_k})  \\
 & = \sum_{\mu'_{k-1},\mu_k\vdash m}\langle\chi_\mu\chi_{\mu_k},\chi_{\mu'_{k-1}}\rangle\sum_{\mu_1,\ldots,\mu_{k-1}\vdash m}\langle\chi_{\mu'_{k-1}},\chi_{\mu_1}\cdots\chi_{\mu_{k-1}} \rangle(\alpha_1\wr\chi_{\mu_1})\times\cdots\times(\alpha_{k-1}\wr\chi_{\mu_{k-1}})  \\
 & = \sum_{\mu_1,\ldots,\mu_k\vdash m}\langle\chi_{\mu},\chi_{\mu_1}\cdots\chi_{\mu_k}\rangle_{S_m}(\alpha_1\wr\chi_{\mu_1})\times\cdots\times(\alpha_k\wr\chi_{\mu_k})
\end{split}
\end{equation}
using that irreducible characters of $S_m$ are real and form an orthonormal basis.
\end{proof}

\begin{proof}[Proof of lemma \ref{lem:product}]
As the map in eq. (\ref{eq:fdef}) defining $f_{[\sigma_1,\ldots,\sigma_k]}$ factors through
$S((\mathbb{C}^{n_1})^{\otimes l_1})\otimes\cdots\otimes(\mathbb{C}^{n_k})^{\otimes l_k})\oplus((\mathbb{C}^{n_1})^{\otimes l_1})\otimes\cdots\otimes(\mathbb{C}^{n_k})^{\otimes l_k}))^*$, we can also work in the symmetric algebra $(S((\mathbb{C}^{n_1})^{\otimes l_1})\otimes\cdots\otimes(\mathbb{C}^{n_k})^{\otimes l_k})\oplus((\mathbb{C}^{n_1})^{\otimes l_1})\otimes\cdots\otimes(\mathbb{C}^{n_k})^{\otimes l_k}))^*)^{LU_n}$ for
some large $n$. In this algebra the image of $(\sigma_1,\ldots,\sigma_k)$ is
\begin{multline}
\sum(e_{i_1^1}\otimes\cdots\otimes e_{i_1^{l_1}}\otimes\cdots\otimes e_{i_k^{1}}\otimes\cdots\otimes e_{i_k^{l_k}})\cdots \\ \cdots(e_{i_1^{(m-1)l_1+1}}\otimes\cdots\otimes e_{i_1^{ml_1}}\otimes\cdots\otimes e_{i_k^{(m-1)l_k+1}}\otimes\cdots\otimes e_{i_k^{ml_k}})\cdot  \\
\cdot(e_{i_1^{\sigma_1(1)}}\otimes\cdots\otimes e_{i_1^{\sigma_1(l_1)}}\otimes\cdots\otimes e_{i_k^{\sigma_k(1)}}\otimes\cdots\otimes e_{i_k^{\sigma_k(l_k)}})\cdots \\ \cdots(e_{i_1^{\sigma_1((m-1)l_1+1)}}\otimes\cdots\otimes e_{i_1^{\sigma_1(ml_1)}}\otimes\cdots\otimes e_{i_k^{\sigma_k((m-1)l_k+1)}}\otimes\cdots\otimes e_{i_k^{\sigma_k(ml_k)}})
\end{multline}
where the sum is over the possible values of the indices $i_1^1,\ldots,i_1^{ml_1},\ldots,i_k^1,\ldots,i_k^{ml_k}$
and similarly for $(\sigma'_1,\ldots,\sigma'_k)$. It is convenient to denote the indices
in the sum corresponding to $(\sigma'_1,\ldots,\sigma'_k)$ by $i_1^{ml_1+1},\ldots,i_1^{(m+m')l_1},\ldots,i_k^{ml_k+1},\ldots,i_k^{(m+m')l_k}$
and to regard the permutation $\sigma'_j$ as a bijection from $\{ml_j+1,ml_j+2,\ldots,(m+m')l_j\}$
to itself. This convention clearly does not affect the definition of $f_{[\sigma'_1,\ldots,\sigma'_k]}$
and it is consistent with the definition of $\star$.
Then we have
\begin{equation}
\begin{split}
\sum & (e_{i_1^1}\otimes\cdots\otimes e_{i_1^{l_1}}\otimes\cdots\otimes e_{i_k^{1}}\otimes\cdots\otimes e_{i_k^{l_k}})\cdots \\
& \cdots(e_{i_1^{(m-1)l_1+1}}\otimes\cdots\otimes e_{i_1^{ml_1}}\otimes\cdots\otimes e_{i_k^{(m-1)l_k+1}}\otimes\cdots\otimes e_{i_k^{ml_k}})\cdot  \\
& \cdot(e_{i_1^{\sigma_1(1)}}\otimes\cdots\otimes e_{i_1^{\sigma_1(l_1)}}\otimes\cdots\otimes e_{i_k^{\sigma_k(1)}}\otimes\cdots\otimes e_{i_k^{\sigma_k(l_k)}})\cdots \\
& \cdots(e_{i_1^{\sigma_1((m-1)l_1+1)}}\otimes\cdots\otimes e_{i_1^{\sigma_1(ml_1)}}\otimes\cdots\otimes e_{i_k^{\sigma_k((m-1)l_k+1)}}\otimes\cdots\otimes e_{i_k^{\sigma_k(ml_k)}})\cdot \\
\cdot\sum &(e_{i_1^{ml_1+1}}\otimes\cdots\otimes e_{i_1^{ml_1+l_1}}\otimes\cdots\otimes e_{i_k^{ml_k+1}}\otimes\cdots\otimes e_{i_k^{ml_k+l_k}})\cdots \\
& \cdots(e_{i_1^{ml_1+(m'-1)l_1+1}}\otimes\cdots\otimes e_{i_1^{ml_1+m'l_1}}\otimes\cdots\otimes e_{i_k^{ml_k+(m'-1)l_k+1}}\otimes\cdots\otimes e_{i_k^{ml_k+m'l_k}})\cdot  \\
& \cdot(e_{i_1^{\sigma_1(ml_1+1)}}\otimes\cdots\otimes e_{i_1^{\sigma_1(ml_1+l_1)}}\otimes\cdots\otimes e_{i_k^{\sigma_k(ml_k+1)}}\otimes\cdots\otimes e_{i_k^{\sigma_k(ml_k+l_k)}})\cdots \\
& \cdots(e_{i_1^{\sigma_1((m+m'-1)l_1+1)}}\otimes\cdots\otimes e_{i_1^{\sigma_1((m+m')l_1)}}\otimes\cdots\otimes e_{i_k^{\sigma_k((m+m'-1)l_k+1)}}\otimes\cdots\otimes e_{i_k^{\sigma_k((m+m')l_k)}}) = \\
= \sum(e_{i_1^1} & \otimes\cdots\otimes e_{i_1^{l_1}}\otimes\cdots\otimes e_{i_k^{1}}\otimes\cdots\otimes e_{i_k^{l_k}})\cdots \\
& \cdots(e_{i_1^{((m+m')-1)l_1+1}}\otimes\cdots\otimes e_{i_1^{(m+m')l_1}}\otimes\cdots\otimes e_{i_k^{((m+m')-1)l_k+1}}\otimes\cdots\otimes e_{i_k^{(m+m')l_k}})\cdot  \\
& \cdot(e_{i_1^{\sigma''_1(1)}}\otimes\cdots\otimes e_{i_1^{\sigma''_1(l_1)}}\otimes\cdots\otimes e_{i_k^{\sigma''_k(1)}}\otimes\cdots\otimes e_{i_k^{\sigma''_k(l_k)}})\cdots \\
& \cdots(e_{i_1^{\sigma''_1(((m+m')-1)l_1+1)}}\otimes\cdots\otimes e_{i_1^{\sigma''_1((m+m')l_1)}}\otimes\cdots\otimes e_{i_k^{\sigma''_k(((m+m')-1)l_k+1)}}\otimes\cdots\otimes e_{i_k^{\sigma''_k((m+m')l_k)}})
\end{split}
\end{equation}
where $\sigma''_j$ is the image of $(\sigma_j,\sigma'_j)$ under the inclusion
$S_m\times S_m'\hookrightarrow S_{m+m'}$. But by the definition of $\star$ we
have $[\sigma''_1,\ldots,\sigma''_k]=[\sigma_1,\ldots,\sigma_k]\star[\sigma'_1,\ldots,\sigma'_k]$.
\end{proof}

\begin{proof}[Proof of lemma \ref{lem:vanishing}]
The sum in eq. (\ref{eq:fdef}) can be rewritten as a nested sum, grouping together
the possible tuples of indices such that the sets of multisets $\{\{i_1^1,\ldots,i_1^{|\lambda_1|}\},\{i_1^{|\lambda_1|+1},\ldots,i_1^{2|\lambda_1|}\},$ \ldots $,\{i_1^{(m-1)|\lambda_1|+1},\ldots,i_1^{m|\lambda_1|}\}\},$\ldots $,\{\{i_k^1,\ldots,i_k^{|\lambda_k|}\},\ldots,\{i_k^{(m-1)|\lambda_k|+1},\ldots,i_k^{m|\lambda_k|}\}\}$ and also the sets $\{\{i_1^{\sigma_1(1)},\ldots,i_1^{\sigma_1(|\lambda_1|)}\},\{i_1^{\sigma_1(|\lambda_1|+1)},\ldots,i_1^{\sigma_1(2|\lambda_1|)}\},\ldots,\{i_1^{\sigma_1((m-1)|\lambda_1|+1)},\ldots,i_1^{\sigma_1(m|\lambda_1|)}\}\},$ \ldots $,\{\{i_k^1,\ldots$ $,i_k^{\sigma_1(|\lambda_k|)}\},\ldots,\{i_k^{\sigma_1((m-1)|\lambda_k|+1)},\ldots,i_k^{\sigma_1(m|\lambda_k|)}\}\}$ are kept fixed.
The inner sum is therefore over permutations stabilizing the above structure which is
precisely ${H_m}_s$, since ${H_m}$ and $s{H_m}s^{-1}$ is the stabilizer of the first and last $k$ sets,
respectively.

Taking into account the sign changes introduced when flipping a pair in a wedge product
we have that the inner sums are proportional to (the projection of)
\begin{equation}
\begin{split}
& \sum_{s\in {H_m}_s}\overline{((\chi_{\lambda_1}\times\cdots\times\chi_{\lambda_k})\wr 1)^s(h)}((\chi_{\lambda_1}\times\cdots\times\chi_{\lambda_k})\wr 1)(h)\cdot \\
 & \cdot \left[(e_{i_1^1}\otimes\cdots\otimes e_{i_1^{|\lambda_1|}})\otimes\cdots\otimes(e_{i_1^{\sigma_1(1)}}^*\otimes\cdots\otimes e_{i_1^{\sigma_1(|\lambda_1|)}}^*)\otimes\cdots\right]= \\
= & [\ldots]\langle((\chi_{\lambda_1}\times\cdots\times\chi_{\lambda_k})\wr 1)^s,\res_{H_m}^{{H_m}_s}(\chi_{\lambda_1}\times\cdots\times\chi_{\lambda_k})\wr 1\rangle_{{H_m}_s}
\end{split}
\end{equation}
and therefore vanish by assumption.
\end{proof}

\end{document}